\providecommand{\customgenericname}{}
\newcommand{\newcustomtheorem}[2]{%
  \newenvironment{#1}[1]
  {%
   \renewcommand\customgenericname{#2}%
   \renewcommand\theinnercustomgeneric{##1}%
   \innercustomgeneric
  }
  {\endinnercustomgeneric}
}
 \newtheorem{lemma}{Lemma}
 \newtheorem{definition}{Definition}
\newcommand{\ket}[1]{|#1\rangle}  % braket
\newcommand{\codepar}[1]{\ensuremath{[\![#1]\!]}}
\definecolor{azure}{rgb}{0.0, 0.5, 1.0}
\definecolor{blue-green}{rgb}{0.0, 0.67, 0.57}
\definecolor{neoncarrot}{rgb}{1.0, 0.64, 0.26}
\crefname{equation}{Eq.\!}{Eqs.\!}
\crefname{figure}{Fig.\!}{Figs.\!}
\mathchardef\mhyphen="2D
\begin{document}

% \title{The effect of possible fault locations on the performance of planar color codes of distance 9}
% \title{Code capacity performance might not predict circuit level performance for d=9 color codes}
% \title{The code you thought was the best in code capacity might be the worse choice in circuit-level}
% Preserving the distance of self-orthogonal CSS codes with only two extra qubits
% Lookup table based soft decoding of the concatenated Steane code
%Preserving the distance of the concatenated Steane code with only two extra qubits
%Low-ancilla distance-preserving decoder for the 49-qubit concatenated Steane code
\title{Concatenated Steane code with single-flag syndrome checks}

\author{Balint Pato}
\email{balint.pato@duke.edu}
\thanks{contributed equally}
\affiliation{
	Duke Quantum Center, Duke University, Durham, NC 27701, USA
}
\affiliation{
	Department of Electrical and Computer Engineering, Duke University, Durham, NC 27708, USA
}
\author{Theerapat Tansuwannont}
\email{t.tansuwannont.qiqb@osaka-u.ac.jp}
\thanks{contributed equally}
\thanks{present address: Center for Quantum Information and Quantum Biology, Osaka University, Toyonaka, Osaka 560-0043, Japan.}
\affiliation{
	Duke Quantum Center, Duke University, Durham, NC 27701, USA
}
\affiliation{
	Department of Electrical and Computer Engineering, Duke University, Durham, NC 27708, USA
}
\author{Kenneth R. Brown}
\email{ken.brown@duke.edu}
\affiliation{
	Duke Quantum Center, Duke University, Durham, NC 27701, USA
}
\affiliation{
	Department of Electrical and Computer Engineering, Duke University, Durham, NC 27708, USA
}
\affiliation{
	Department of Physics, Duke University, Durham, NC 27708, USA
}
\affiliation{
	Department of Chemistry, Duke University, Durham, NC 27708, USA
}

\begin{abstract}
A fault-tolerant error correction (FTEC) protocol with a high error suppression rate and low overhead is very desirable for the near-term implementation of quantum computers. In this work, we develop a distance-preserving flag FTEC protocol for the \codepar{49,1,9} concatenated Steane code, which requires only two ancilla qubits per generator and can be implemented on a planar layout. We generalize the weight-parity error correction (WPEC) technique from \cite{TL21} and find a gate ordering of flag circuits for the concatenated Steane code, which makes syndrome extraction with two ancilla qubits per generator possible. The FTEC protocol is constructed using the optimization tools for flag FTEC developed in \cite{PTHB23} and is simulated under the circuit-level noise model without idling noise. Our simulations give a pseudothreshold of $1.64 \times 10^{-3}$ for the \codepar{49,1,9} concatenated Steane code, which is better than a pseudothreshold of $1.43 \times 10^{-3}$ for the \codepar{61,1,9} 6.6.6 color code simulated under the same settings. This is in contrast to the code capacity model where the \codepar{61,1,9} code performs better.
\end{abstract}

\maketitle

\section{Introduction}

\begingroup

A quantum error-correcting code \cite{Shor96} protects quantum information from local noise by encoding logical operators into non-local operators on a larger Hilbert space. In the case of stabilizer codes \cite{Gottesman97}, correcting errors is possible by measuring and decoding the syndrome, which tells us the eigenvalues of the stabilizer generators without destroying the encoded logical information. The scheme offers protection of the logical information at the cost of using multiple physical qubits per logical qubit for encoding, and a single ancilla qubit per stabilizer generator to extract each stabilizer generator's syndrome bit. 

In a realistic setting where the gate, preparation, and measurement operations are imperfect, a fault-tolerant error correction (FTEC) protocol such as the ones proposed by Shor \cite{Shor96}, Steane \cite{Steane97}, and Knill \cite{KL97} is required to curb the propagation of errors during the syndrome extraction in each round of error correction. Turning a non-fault-tolerant protocol into a fault-tolerant one might increase the total number of ancillary qubits \cite{Shor96, Steane97,KL97,CR17a,CR20,CKYZ20}, increase the depth of the syndrome extraction circuit due to repeated syndrome measurements \cite{Shor96}, or decrease the \textit{effective distance} (the minimum number of faults required to create a logical error) of the FTEC protocol \cite{BKS21, LAR11, Delfosse_2014, GJ23}. It is of great interest to reduce all of these types of overheads for various codes. Moreover, as near-term quantum platforms are already capable of testing from dozens up to hundreds of physical qubits  \cite{Morvan_2023,Kim_2023,Wang_2023,Bluvstein_2024}, it is valuable to explore low-overhead, distance-preserving protocols that are optimized for smaller codes. 

Code concatenation is a technique to construct families of codes using multiple levels of encoding; logical qubits of a lower-level code are encoded into logical qubits of a higher-level code. The earliest threshold theorems for fault-tolerant quantum computation \cite{AB08, Preskill98,KLZ96,AGP06} prove that as long as physical errors are below the fault-tolerant accuracy threshold, the logical error rate can be suppressed exponentially in the number of concatenated layers. Two big challenges in concatenated code families are decoding and complicated ancilla qubits. 

Hard decoding is one option for decoding concatenated codes, as presented by Aliferis, Gottesman, and Preskill \cite{AGP06}. Decoding and recovery are done layer-by-layer, making ``hard decisions'' about possible errors at each level of concatenation. Hard decoding can lead to a decrease in the effective distance even under the code capacity noise model. In contrast, optimal decoding of concatenated codes under the code capacity noise model was found by Poulin \cite{Poulin_2006} using ``soft decoding''. To our knowledge, Poulin's soft decoder has not yet been extended to circuit-level noise models on concatenated codes.

At higher levels of concatenation, besides the data qubits, the ancilla qubits are typically encoded in the code of the level below. This increases the qubit overhead significantly. In recent years, great progress has been made in creating lightweight ancillary structures using flag qubits to measure error syndromes and assist decoding \cite{CR17a,CR20,CB18,TCL20,CKYZ20,CR17b,CC19}.

%Recently, optimization tools for small self-orthogonal Calberbank-Shor-Steane (CSS) codes have been developed in \cite{PTHB23}, resulting in distance-preserving, low-overhead syndrome extraction with only one syndrome ancilla and one flag qubit per stabilizer generator.

In this work, we explore the performance of the \codepar{49,1,9} concatenated Steane code. We make use of the optimization tools for small self-orthogonal Calberbank-Shor-Steane (CSS) codes recently developed in \cite{PTHB23}, which provides a distance-preserving, low-overhead syndrome extraction scheme with only one syndrome ancilla qubit and one flag qubit per stabilizer generator. Our contributions are as follows: 
%In this work, we make use of the tools in \cite{PTHB23} and explore the performance of the \codepar{49,1,9} level-2 concatenated Steane code. 
(1) We improve upon the state-of-the-art result of \cite{TL21}, which proposed an FTEC protocol capable of correcting fault combinations from up to 3 faults using only two ancilla qubits. With our careful CNOT ordering, we achieve a distance-preserving protocol that can correct up to 4 faults. We also demonstrate a planar structure of the \codepar{49,1,9} concatenated Steane code. (2) We compare the performance of the \codepar{49,1,9} concatenated Steane code to the \codepar{61,1,9} 6.6.6 color code \cite{BM06}. As reported by Sabo, Aloshius, and Brown \cite{SAB22}, the \codepar{61,1,9} 6.6.6 color code slightly outperforms the \codepar{49,1,9} concatenated Steane code in exchange for a higher qubit overhead under the code capacity error model. We reproduce this result and establish that the lookup table-based (LUT) decoder is equivalent to the trellis-based decoding \cite{SAB22}. Under circuit-level noise, we find that the \codepar{49,1,9} concatenated Steane code has a pseudothreshold of $1.64 \times 10^{-3}$ which, surprisingly, slightly outperforms the \codepar{61,1,9} 6.6.6 color code that has a pseudothreshold of $1.43 \times 10^{-3}$. (3) We show that while the \codepar{49,1,9} 4.8.8 color code has the same number of qubits as the \codepar{49,1,9} concatenated Steane code, the concatenated Steane code outperforms the 4.8.8 color code when we look at the number of extra qubits required. This is because it is not possible to construct a distance-preserving FTEC with only a single flag qubit per generator for the 4.8.8 color codes, as shown in \cref{app:488nogo}. With a pseudothreshold above $10^{-3}$, the 2D-embeddable, distance-preserving FTEC protocol for the \codepar{49,1,9} concatenated Steane code thus provides an intriguing experimental target for near-term devices. 

The paper is organized as follows: in \cref{sec:QECC}, we introduce the planar structure of the \codepar{49,1,9} concatenated Steane code in detail. The noise models and the definition of fault tolerance are reviewed in \cref{sec:noise_model}, followed by our CNOT ordering and the decoders for the different noise models in \cref{sec:find_CNOT_ordering}. Numerical results are explained in \cref{sec:numerics}, which we discuss in \cref{sec:discussion} and derive our conclusions in \cref{sec:conclusion}.

\endgroup

\section{The \codepar{49,1,9} concatenated Steane code}
\label{sec:QECC}

%\HR{An \codepar{n,k,d} stabilizer code \cite{Gottesman97} encodes $k$ logical qubits using $n$ physical qubits and can correct up to $\tau=(d-1)/2$ errors, where $d$ is the code distance. A stabilizer code is usually described by its corresponding \emph{stabilizer group}, an Abelian group generated by $r=n-k$ commuting Pauli operators whose elements are called stabilizers. The coding subspace is the simultaneous +1 eigenspace of all elements in the stabilizer group. (need paraphrase)}

An \codepar{n,k,d} stabilizer code \cite{Gottesman97} is a code that uses $n$ data qubits to represent $k$ logical qubits and has distance $d$. The code can correct any error acting nontrivially on up to $\tau=\lfloor(d-1)/2\rfloor$ data qubits. A stabilizer code can be described by its corresponding \emph{stabilizer group}, an Abelian group that is generated by $r\equiv n-k$ commuting Pauli operators and does not contain $-I$. The elements of the stabilizer group are called stabilizers, and the codespace is defined by the simultaneous +1 eigenspace of all stabilizers.

In this work, we focus on the \codepar{49,1,9} concatenated Steane code \cite{Steane96b} which is a Calderbank-Shor-Steane (CSS) code \cite{CS96,Steane96b}, a stabilizer code for which the stabilizer generators can be chosen to be purely $X$ or purely $Z$ type. %The code definition and a possible planar qubit layout of the code is described below.
Before we describe the \codepar{49,1,9} concatenated Steane code, let us first consider the \codepar{7,1,3} Steane code. The code can be described by the following stabilizer generators:

\begingroup
\setlength\arraycolsep{1pt}	
\begin{equation}
	\begin{matrix}
		g^x_1: &X &I &I &X &I &X &X, & \quad & g^z_1: &Z &I &I &Z &I &Z &Z,\\
		g^x_2: &I &X &I &X &X &I &X, & \quad & g^z_2: &I &Z &I &Z &Z &I &Z,\\
		g^x_3: &I &I &X &I &X &X &X, & \quad & g^z_3: &I &I &Z &I &Z &Z &Z.
	\end{matrix}
\end{equation}%
\endgroup

Logical $X$ and logical $Z$ operators of the \codepar{7,1,3} code are of the form $\tilde{X}=X^{\otimes 7}M$ and $\tilde{Z}=Z^{\otimes 7}N$, where $M$ and $N$ are some stabilizers of the \codepar{7,1,3} code. One can verify that the minimum weight of a logical operator is 3. The data qubits of this code can be arranged on a plane as illustrated in \cref{fig: stl2}\hyperlink{target:stl2}{a}.

\begin{figure}[tbp]
\centering
\hypertarget{target:stl2}{}
\includegraphics[width=0.45\textwidth]{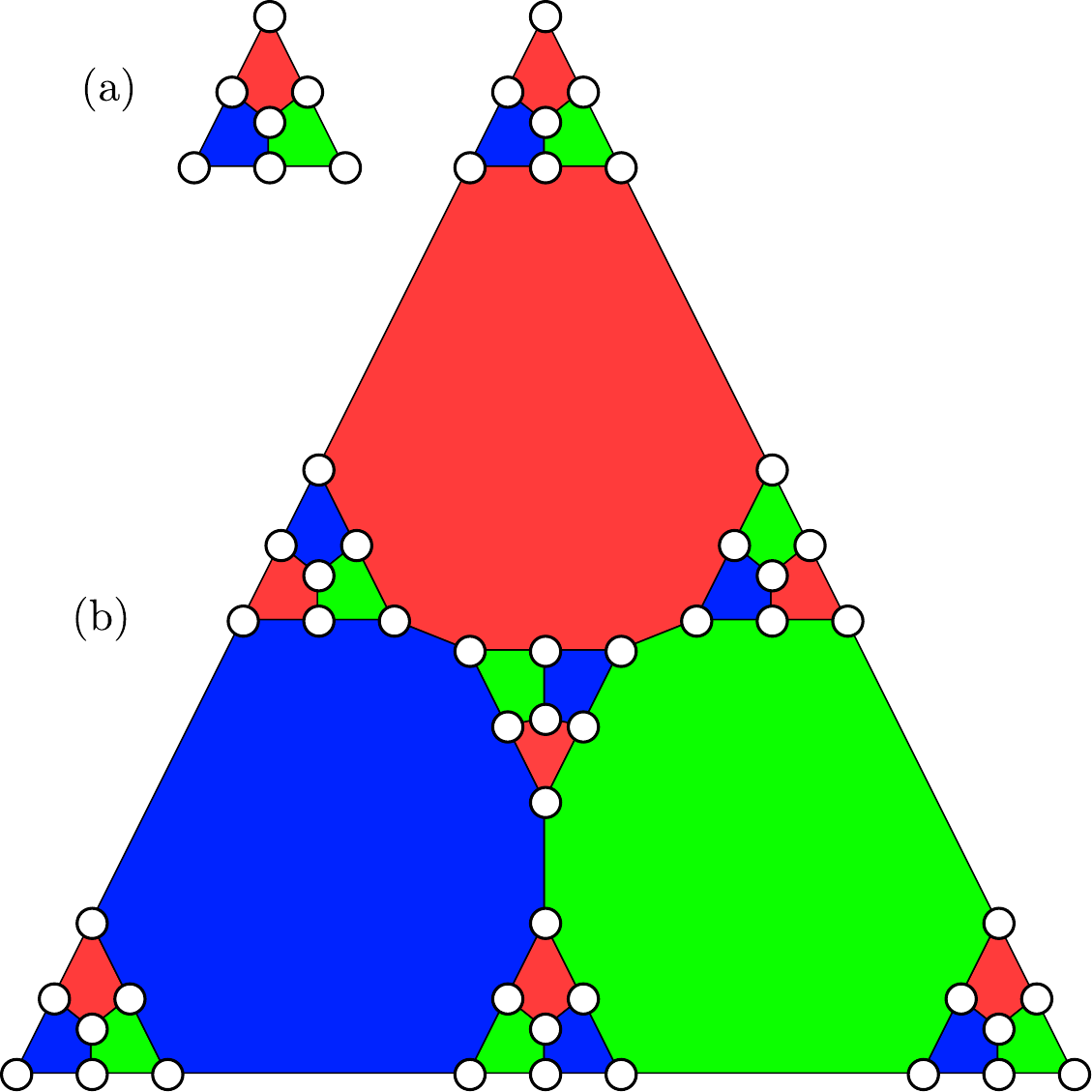}
\caption{(a) The \codepar{7,1,3} Steane code. (b) The \codepar{49,1,9} concatenated Steane code.}
\label{fig: stl2}
\end{figure}

%\begin{figure}[tbp]
%	\centering
%	\includegraphics[width=0.42\textwidth]{Fig1}
%	%\captionsetup{justification=centering}
%	\caption{Shor syndrome extraction circuit for measuring a stabilizer generator of the form $M=P_1\otimes P_2 \otimes P_3 \otimes P_4$. The ancilla qubits are initially prepared in the cat state $\frac{1}{\sqrt{2}}(|0000\rangle+|1111\rangle)$ and measured in the $Z$ basis at the end. Even and odd parities of the measurement results correspond to $+1$ and $-1$ eigenvalues of $M$.}
%	\label{fig:Shor}
%\end{figure}

The \codepar{49,1,9} concatenated Steane code can be obtained by concatenating the \codepar{7,1,3} code with itself. The data qubits of the \codepar{49,1,9} code can be divided into 7 blocks, in which each block has 7 qubits that behave like the \codepar{7,1,3} code. The \codepar{49,1,9} code can be described by two types of stabilizer generators: the 1st-level generators of the form $g^x_i \otimes \tilde{I}^{\otimes 6}, g^z_i \otimes \tilde{I}^{\otimes 6}, \tilde{I} \otimes g^x_i \otimes \tilde{I}^{\otimes 5}, \tilde{I} \otimes g^z_i \otimes \tilde{I}^{\otimes 5}, \dots, \tilde{I}^{\otimes 6} \otimes g^x_i, \tilde{I}^{\otimes 6} \otimes g^z_i$ where $\tilde{I}=I^{\otimes 7}$ (which are the generators of the \codepar{7,1,3} code in each block), and the 2nd-level generators of the form

\begingroup
\setlength\arraycolsep{1pt}	
\begin{equation}
	\begin{matrix}
		\tilde{g}^x_1: &\tilde{X} &\tilde{I} &\tilde{I} &\tilde{X} &\tilde{I} &\tilde{X} &\tilde{X}, & \quad & \tilde{g}^z_1: &\tilde{Z} &\tilde{I} &\tilde{I} &\tilde{Z} &\tilde{I} &\tilde{Z} &\tilde{Z},\\
		\tilde{g}^x_2: &\tilde{I} &\tilde{X} &\tilde{I} &\tilde{X} &\tilde{X} &\tilde{I} &\tilde{X}, & \quad & \tilde{g}^z_2: &\tilde{I} &\tilde{Z} &\tilde{I} &\tilde{Z} &\tilde{Z} &\tilde{I} &\tilde{Z},\\
		\tilde{g}^x_3: &\tilde{I} &\tilde{I} &\tilde{X} &\tilde{I} &\tilde{X} &\tilde{X} &\tilde{X}, & \quad & \tilde{g}^z_3: &\tilde{I} &\tilde{I} &\tilde{Z} &\tilde{I} &\tilde{Z} &\tilde{Z} &\tilde{Z},
	\end{matrix}
\end{equation}%
\endgroup

where $\tilde{X}$ and $\tilde{Z}$ are logical $X$ and logical $Z$ operators of the \codepar{7,1,3} code. Note that the minimum weight of a 2nd-level generator is 12. Logical $X$ and logical $Z$ operators of the \codepar{49,1,9} code are of the form $\bar{X}=X^{\otimes 49}M$ and $\bar{Z}=Z^{\otimes 49}N$, where $M$ and $N$ are some stabilizers of the \codepar{49,1,9} code. Similar to the \codepar{7,1,3} code, it is possible to arrange the data qubits of the \codepar{49,1,9} code on a plane as shown in \cref{fig: stl2}\hyperlink{target:stl2}{b}. Logical Hadamard ($H$), phase ($S$), and CNOT gates can be implemented for the \codepar{49,1,9} code by applying the corresponding bit-wise operation on all qubits in the code block or on all pairs of qubits between two code blocks in the case of CNOT gate, thus any Clifford operation can be performed transversally.

\section{Noise models and fault-tolerant error correction decoders}
\label{sec:noise_model}

%\HG{Noise model: code cap, phenomenological, circuit level}

For any stabilizer code, one can perform error correction (EC) by first measuring the eigenvalues of the stabilizer generators. An $r$-bit string of the measurement results (where bits 0 and 1 correspond to $+1$ and $-1$ eigenvectors of each generator) is called an error syndrome. A mapping from a sequence of syndromes to a recovery operator is called an EC decoder. Since many possible combinations of faults can give rise to the same sequence of syndromes, the EC decoder succeeds if the actual data error due to faults and the recovery operator are the same up to a multiplication of some stabilizer, and the EC decoder fails if the actual data error due to faults and the recovery operator is off by a multiplication of some nontrivial logical operator.

In this work, we are interested in error correction in the circuit-level noise model. Here we assume that single-qubit and two-qubit gates can be faulty, where the gate faults are modeled by a single-qubit Pauli error $P \in \{X,Y,Z\}$ after each single-qubit gate with error probability $p/3$ each, and a two-qubit Pauli error $P_1\otimes P_2 \in \{I,X,Y,Z\}^{\otimes 2} \setminus \{I \otimes I\}$ after each two-qubit gate with error probability $p/15$ each. We also assume that a single-qubit preparation and a single-qubit measurement can be faulty, where the faults are modeled by a single-qubit bit-flip channel with error probability $p$ after each single-qubit preparation or before each single-qubit measurement. This noise model reflects a platform in which the strength of gate errors and qubit preparation and measurement errors are relatively large compared to the strength of idling qubit errors.

%\HG{Circuit with bare ancilla. Generally does not preserve distance in the circuit-level noise model. Goal is to make things FT}

\begin{figure}[tbp]
    \begin{subfigure}[b]{0.45\textwidth}
	\begin{equation}
		\Qcircuit @C=1em @R=.7em {
		& \ctrl{4} & \qw & \qw & \qw & \qw & \\
		& \qw & \ctrl{3} & \qw & \qw & \qw & \\
		& \qw & \qw & \ctrl{2} & \qw & \qw & \\
		& \qw  & \qw & \qw & \ctrl{1} & \qw & \\
		\lstick{\ket{0}} & \targ & \targ & \targ & \targ & \meter
		} \nonumber
	\end{equation}
        \captionsetup{justification=centering}
	\caption{}
	\label{fig:bare}
    \end{subfigure}
    \begin{subfigure}[b]{0.45\textwidth}
            \begin{equation}
    		\Qcircuit @C=1em @R=.7em {
    		& \ctrl{4} & \qw & \qw & \qw & \qw & \qw & \qw & \\
    		& \qw & \qw & \ctrl{3} & \qw & \qw & \qw & \qw & \\
    		& \qw & \qw & \qw & \ctrl{2} &  \qw & \qw & \qw & \\
    		& \qw & \qw & \qw & \qw & \qw & \ctrl{1} & \qw & \\
    		\lstick{\ket{0}} & \targ & \targ & \targ & \targ & \targ & \targ & \meter \\
    		\lstick{\ket{0}} & \gate{H} & \ctrl{-1} & \qw & \qw & \ctrl{-1} & \gate{H} & \meter
    		} \nonumber
    	\end{equation}
            \captionsetup{justification=centering}
    	\caption{}
    	\label{fig:flag}
        \end{subfigure}
    \caption{(a) A syndrome extraction circuit with a bare ancilla for measuring a stabilizer generator of the form $ZZZZ$. (b) A flag circuit for measuring the same stabilizer generator.}
\end{figure}
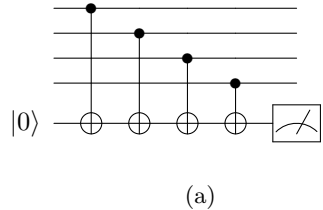
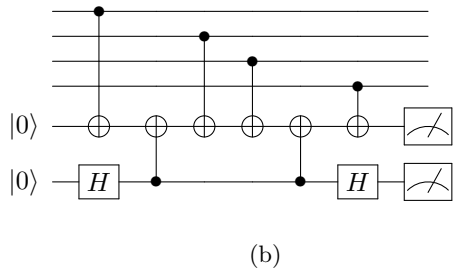

A naive way to measure the eigenvalue of a stabilizer generator is to use a syndrome extraction circuit with a single, so-called \textit{bare ancilla}, as shown in \cref{fig:bare}. However, in the circuit-level noise model, the number of faults that can be corrected by an error correction protocol with this kind of circuit might be less than the number of errors correctable by the code. This is because a single gate fault may cause a single-qubit error that can propagate throughout the protocol and become an error on the multiple data qubits. Here we want to ensure that the EC protocol is \emph{fault tolerant} according to the following definition from \cite{TL22}, which is extended from the definition of FTEC from \cite{AGP06} (see also \cite{PTHB23} for comparison between definitions from \cite{AGP06} and \cite{TL22}).

\begin{definition}{Fault-tolerant error correction \cite{TL22}}

	Let $t \leq \lfloor (d-1)/2\rfloor$ where $d$ is the distance of a stabilizer code. An error correction protocol is \emph{$t$-fault tolerant} if the following two conditions are satisfied:
	\begin{enumerate}
		\item For any input codeword with an error that can arise from $r$ faults before the protocol and corresponds to the trivial cumulative flag vector, if $s$ faults occur during the protocol with $r+s \leq t$, ideally decoding the output state gives the same codeword as ideally decoding the input state.
		\item If $s$ faults occur during the protocol with $s \leq t$, regardless of the number of faults that can cause the input error, the output state differs from any valid codeword by an error that can arise from $s$ faults and corresponds to the trivial cumulative flag vector.
	\end{enumerate}
	\label{def:FT_TL}
\end{definition}

In other words, whenever the total number of faults that occurred in the EC protocol is no more than the number of errors correctable by the underlying code, we want to ensure that the EC protocol can correct the input errors as expected and does not cause the output errors which are not correctable by the next EC cycle.

%\HG{Strategy: using flags. FT depends on CNOT ordering. Try to find good CNOT ordering. Verification tools from PTHB.}

To handle the error propagation issue, this work utilizes the flag FTEC scheme \cite{CR17a} in which a syndrome extraction circuit uses another flag ancilla to catch any fault that can lead to a high-weight error. An example of flag circuits is displayed in \cref{fig:flag}. It is possible to construct a distance-preserving flag FTEC scheme if, for a given set of syndrome extraction circuits, the fault set $\mathcal{F}_t$ with $t=\tau=\lfloor (d-1)/2 \rfloor$ is distinguishable; that is, any pair of fault combinations from up to $t$ faults lead to errors with the same syndrome and flag information (the \textit{full syndrome}) only when the errors are equivalent up to a multiplication of some stabilizer, or equivalently, none of the fault combinations from up to $d-1$ faults can lead to a non-trivial logical error on data qubits with trivial flag information \cite{TL22}. The fault-tolerant properties depend heavily on the structure of the syndrome extraction circuits, particularly the ordering of gates in the circuits. One goal of this work is to find a good gate ordering for the \codepar{49,1,9} concatenated Steane code, which gives a distinguishable fault set. How good a gate ordering can be found will be explained in \cref{sec:find_CNOT_ordering}.

%\HG{Suppose distinguishable, use space and time decoder as in PTHB. Lookup table, MIM, adaptive with flag, ZX separation.}

Given a good gate ordering that satisfies the distinguishability condition, we can construct an FTEC decoder using the ideas and techniques proposed in Ref. \cite{PTHB23}. Here, we consider an FTEC decoder consisting of two parts: the \emph{space decoder}, which maps a reliable syndrome from a single time slice to a recovery (Pauli) operator, and the \emph{time decoder}, which finds a reliable syndrome for the space decoder from a sequence of syndromes. The space decoder in this work is a lookup table decoder, which is a mapping between the full syndrome arising from each fault combination (from up to $t=4$ faults) and the corresponding $n$-qubit Pauli operator for recovery.
%an $n$-qubit Pauli operator arising from each fault combination (from up to $t=4$ faults) and its corresponding error syndrome and flag information. 
The lookup table from flag FTEC can be constructed from a given set of flag circuits by propagating single faults to the end of the circuit and generating all fault combinations of them exhaustively
%using the tools provided in 
\cite{PTHB23}. A lookup table decoder is fast and distance-preserving but requires a lot of memory to store the entries. However, this is not an issue in our case since the lookup table for the \codepar{49,1,9} code is 
%relatively small 
managable in size (34,404,345 items with 0.46 GB memory used) and can be constructed in our simulation.

When 5 or more faults occur, the measured full syndrome may not match the 
%corresponding 
full syndrome of any of the fault combinations 
%Pauli operator 
in the lookup table. To find a recovery operator in this case, we use the Meet-in-the-Middle (MIM) technique \cite{PTHB23} which performs a search at runtime during decoding, starting from any full syndrome that is missing from the table. Although the correction of 5 or more faults is not guaranteed, the MIM technique can significantly reduce the logical error rate and lead to a higher pseudothreshold.
%; see also \cite{PTHB23} for the full details of the MIM technique and its performance of the family of 6.6.6 color codes. 
In this work, we use a lookup table decoder and MIM with search radius 3 as a space decoder for the \codepar{49,1,9} code in the circuit-level noise model.

A syndrome obtained from a single round of full syndrome measurements may or may not correspond to the actual error on the data qubits, so repeated syndrome measurements are required. In this work, we use the ZX separated time decoder \cite{PTHB23}, which we briefly summarize here. The ZX time decoder is a two-tailed adaptive time decoder, an extension of the adaptive strong decoder from \cite{TPB23} for flag FTEC. The condition to stop repeated syndrome measurements for this time decoder changes dynamically depending on the previous measurement outcomes. In particular, this time decoder estimates the number of occurred faults from the full syndrome histories and deducts it from the targeted number of faults ($t$). The repetition stops when there exists a syndrome that is repeated more than the targeted number of faults. It has been shown \cite{TPB23} that with the two-tailed adaptive time decoder, the average number of syndrome measurement rounds for each QEC cycle is no more than $d$. In addition, the ZX time decoder leverages separated $X$ and $Z$ fault counting, in which $Z$-type generator measurements are performed before $X$-type generator measurements, to further improve the pseudothreshold.

\section{Finding the gate ordering for the concatenated Steane code}
\label{sec:find_CNOT_ordering}

In this section, we first describe a good gate ordering for the \codepar{49,1,9} concatenated Steane code, which gives a distinguishable fault set $\mathcal{F}_4$, then we explain how it can be found. Since the \codepar{49,1,9} concatenated Steane code is a CSS code in which $X$-type and $Z$-type generators can be chosen to be of the same form, for simplicity, we will only describe syndrome extraction circuits for measuring $Z$-type generators. In this work, a 1st-level $Z$-type generator of weight 4 and a 2nd-level $Z$-type generator of weight 12 are measured using a flag circuit similar to the one displayed in \cref{subfig:flag_w}. The circuits for measuring $X$-type generators are similar except that each CNOT gate that couples the data qubit and the syndrome ancilla is replaced by the gate in \cref{subfig:XNOT}.

\begin{figure}[tbp]
\begin{subfigure}[b]{0.45\textwidth}
\begin{equation}
\ \ \ \ \ \ \Qcircuit @C=1em @R=.7em {
                 & \ctrl{5} & \qw       & \qw      & \qw             & \qw       & \qw       & \qw       & \qw    \\                 
                 & \qw      & \qw       & \ctrl{4} & \qw             & \qw       & \qw       & \qw       & \qw    \\                
                 &          &           &          & \cdots          &           &           &           & \\
                 & \qw      & \qw       & \qw      & \qw             & \ctrl{2}  & \qw       & \qw       & \qw    \\                 
                 & \qw      & \qw       & \qw      & \qw             & \qw       & \qw       & \ctrl{1}  & \qw    \\
\lstick{\ket{0}} & \targ    & \targ     & \targ    & \qw             & \targ     & \targ     & \targ     & \meter \\
\lstick{\ket{0}} & \gate{H} & \ctrl{-1} & \qw      & \qw              & \qw       & \ctrl{-1} & \gate{H}  & \meter
 \gategroup{1}{4}{7}{6}{.7em}{_\}} \\ 
 &          &           &          & \text{$w-2$ gates}        &           &           &           & 
} \nonumber
\end{equation}
\captionsetup{justification=centering}
\caption{}
\label{subfig:flag_w}
\end{subfigure}
\begin{subfigure}[b]{0.45\textwidth}
\begin{equation}
\Qcircuit @C=1em @R=.7em {
& \gate{H} & \ctrl{1} & \gate{H} & \qw \\
& \qw & \targ & \qw &  \qw 
} \nonumber
\end{equation}
\captionsetup{justification=centering}
\caption{}
\label{subfig:XNOT}
\end{subfigure}
\caption{(a) A flag circuit for measuring a $Z$-type stabilizer generator of weight $w$ used in this work. A flag circuit for measuring an $X$-type stabilizer generator of weight $w$ can be obtained by replacing each CNOT gate that connects the data qubit to the syndrome ancilla with the gate in (b).}
\label{fig:flag_w}
\end{figure}
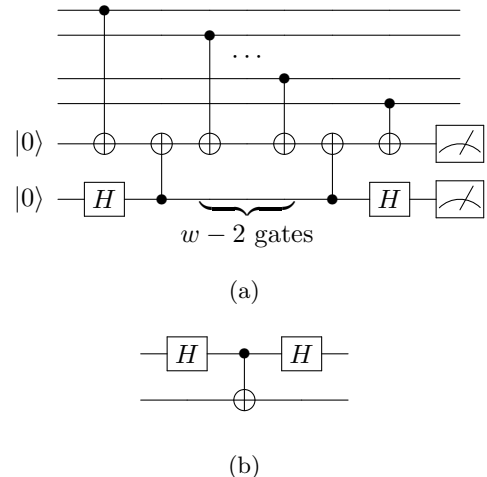
 
The ordering of CNOT gates in the flag circuits that give a distinguishable fault set $\mathcal{F}_4$ are determined by the diagram in \cref{fig:stl2 ordering} for the weight-12 operators. The ordering of the weight-4 stabilizer generators does not affect the fault distinguishability. The explicit ordering for all stabilizer generators used in this work is given in \cref{tab:stl2 orderings} in \cref{app:orderings}, with the qubit labeling given in \cref{fig:stl2 qubits}.

\begin{figure}[tbp]
\centering
\includegraphics[width=0.45\textwidth]{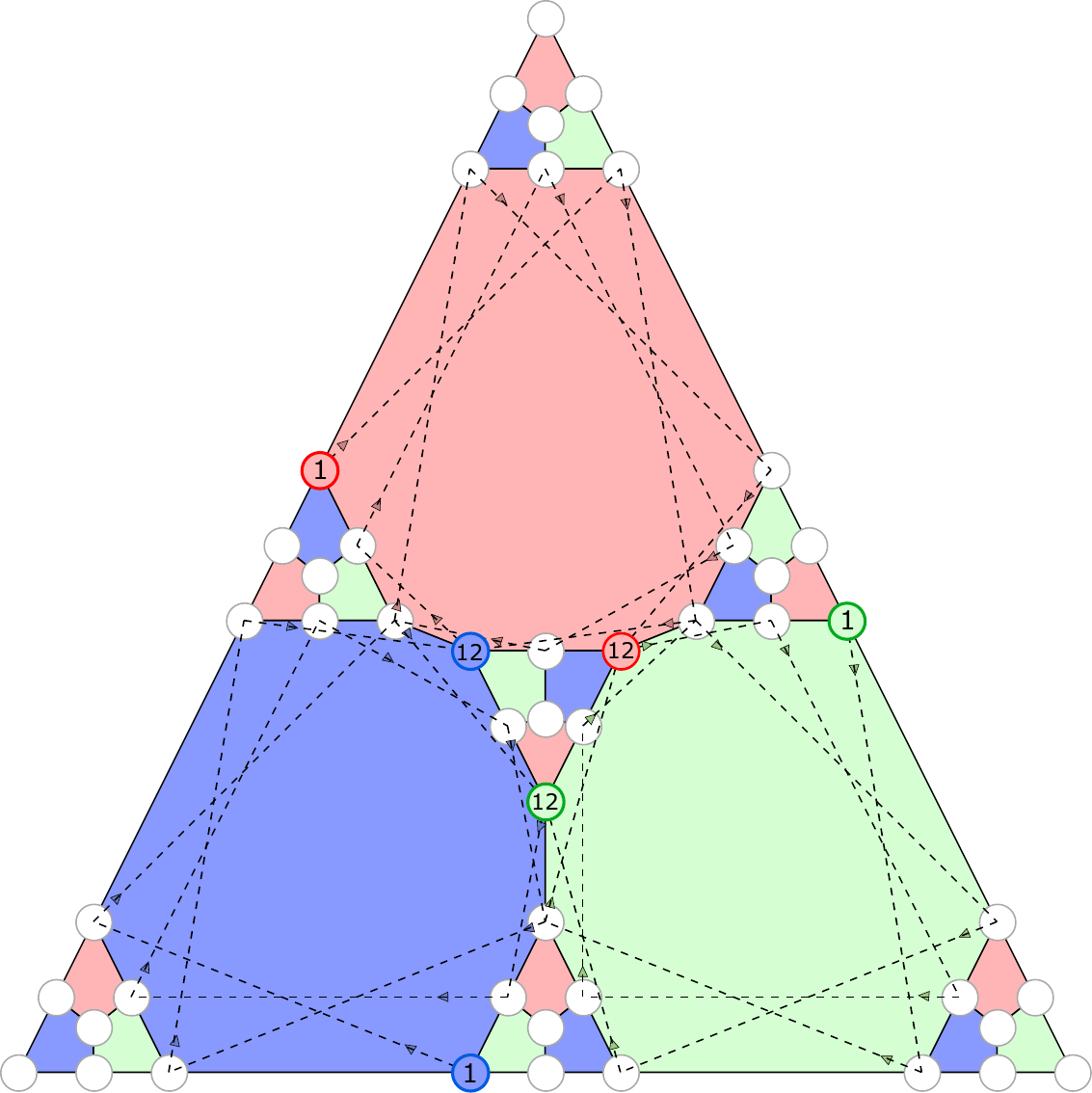}
\caption{A CNOT ordering for the \codepar{49,1,9} code that ensures that the fault set $\mathcal{F}_t$ is distinguishable. Here, we only display the ordering of the weight-12 stabilizer generators. The starting (1) and ending (12) qubits for each generator are highlighted with the same color.}
\label{fig:stl2 ordering}
\end{figure}

Our development of flag circuits is inspired by the weight-parity error correction (WPEC) technique and the flag circuits proposed in \cite{TL21} with an extension in \cite{TL22}. The main difference is that the circuits in the original proposal only allow a fault set $\mathcal{F}_3$ to be distinguishable, while our circuits allow a fault set $\mathcal{F}_4$ to be distinguishable. In other words, when applying to the \codepar{49,1,9} concatenated Steane code, a flag FTEC protocol that uses the flag circuits from the previous work can tolerate up to only 3 faults, while a protocol that uses the circuits in this work can tolerate up to 4 faults. The full details of our circuit development are described below.

%The weight parity error correction (WPEC) technique was originally proposed in \cite{TL21}
The key idea of WPEC is the following lemma:

\begin{figure*}[tbp]
	\centering
	\begin{subfigure}[b]{0.47\textwidth}
		\includegraphics[width=\textwidth]{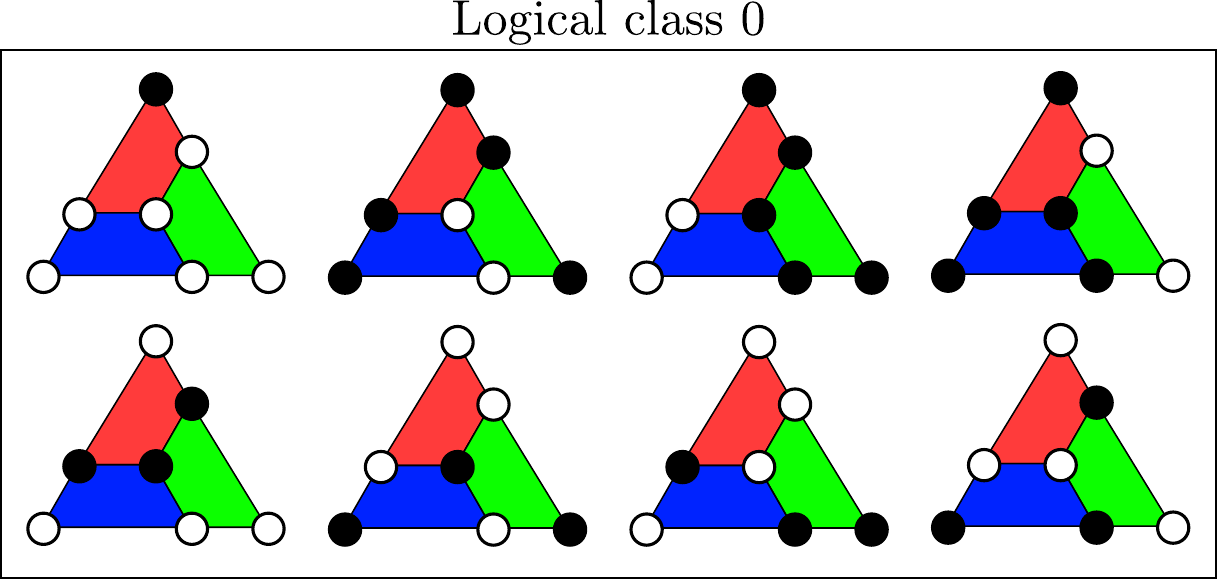}
		\captionsetup{justification=centering}
		%\caption{}
		\label{subfig:error_class0}
	\end{subfigure}	
	\begin{subfigure}[b]{0.47\textwidth}
		\includegraphics[width=\textwidth]{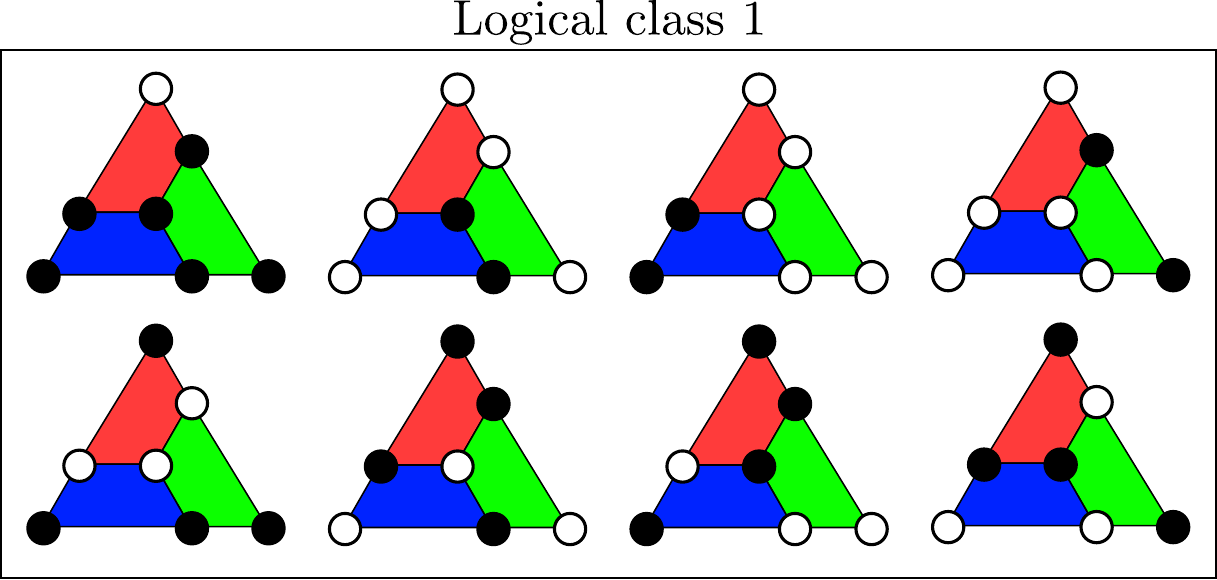}
		\captionsetup{justification=centering}
		%\caption{}
		\label{subfig:error_class1}
	\end{subfigure}
	\caption{All possible $X$-type errors on the \codepar{7,1,3} Steane code with the syndrome $\vec{s}=(100)$ (where the order of the syndrome bits corresponds to red (top), green (bottom-right), and blue (bottom-left) $Z$-type generators). Black and white vertices correspond to qubits with and without errors, respectively. When evaluating the weight parities on the same side of the triangle, errors with the same logical class always have the same weight parity.}
	\label{fig:error_class}%
\end{figure*}

\begin{lemma}
	\cite{TL22} Let $C$ be an \codepar{n,1,d} CSS code in which $n$ is odd and all stabilizer generators have even weight, and let $\mathcal{S}_x,\mathcal{S}_z$ be subgroups generated by $X$-type and $Z$-type generators of $C$, respectively. Also, let $X^{\otimes n}$ and $Z^{\otimes n}$ be logical $X$ and logical $Z$ operators. Suppose $E_1,E_2$ are Pauli errors of any weights with the same syndrome. 
	\begin{enumerate}
		\item In case that $E_1,E_2$ are $Z$-type errors, $E_1E_2 = M$ for some $M \in \mathcal{S}_z$ if and only if $E_1,E_2$ have the same weight parity.
		\item In case that $E_1,E_2$ are $X$-type errors, $E_1E_2 = M$ for some $M \in \mathcal{S}_x$ if and only if $E_1,E_2$ have the same weight parity.
	\end{enumerate}
	\label{lem:WPEC_TL22}%
\end{lemma}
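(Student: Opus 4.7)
The plan is to reduce the statement to a weight-parity dichotomy between two cosets inside the normalizer of the stabilizer group. First I would observe that if $E_1$ and $E_2$ produce the same syndrome, then $E_1 E_2$ commutes with every stabilizer and hence lies in $N(\mathcal{S})$. In case~1, $E_1 E_2$ is purely $Z$-type; since $C$ is a \codepar{n,1,d} code, the only purely $Z$-type elements of $N(\mathcal{S})$ lie in $\mathcal{S}_z$ (the trivial logical class) or in the coset $Z^{\otimes n}\mathcal{S}_z$ (the nontrivial logical $Z$ class). So the question reduces to: in which of these two cosets does $E_1 E_2$ sit?

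Next I would separate the two cosets by weight parity, which is exactly where the two hypotheses of the lemma are used. For $Z$-type Pauli operators $A,B$, the identity $|AB| = |A| + |B| - 2|\mathrm{supp}(A)\cap\mathrm{supp}(B)|$ gives $|AB| \equiv |A|+|B| \pmod 2$. Since every generator of $\mathcal{S}_z$ has even weight by hypothesis, induction on a generating product shows that every element of $\mathcal{S}_z$ has even weight. Because $n$ is odd, $Z^{\otimes n}$ has odd weight, and multiplying an odd-weight operator by any even-weight operator preserves odd parity, so every element of $Z^{\otimes n}\mathcal{S}_z$ has odd weight. This yields a clean parity separation between the two cosets.

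Combining the two steps, $E_1 E_2 \in \mathcal{S}_z$ iff $|E_1 E_2|$ is even iff $|E_1|+|E_2|$ is even iff $E_1$ and $E_2$ have the same weight parity, establishing case~1. Case~2 follows by CSS symmetry: conjugation by transversal Hadamard exchanges $X$- and $Z$-type operators while preserving the self-orthogonal CSS structure and the hypotheses ($n$ odd, even-weight generators), so the same argument applies verbatim. The only real obstacle I anticipate is the bookkeeping in the second step, namely verifying carefully that every element of $\mathcal{S}_z$ is even-weight and every element of $Z^{\otimes n}\mathcal{S}_z$ is odd-weight; once that dichotomy is pinned down, the remainder is routine stabilizer-formalism manipulation.
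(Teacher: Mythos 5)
Your argument is correct. Note that the paper itself does not prove this lemma --- it is imported from \cite{TL22} --- so the only in-paper point of comparison is the proof of its generalization, \cref{lem:WPEC_new}. Both arguments share the same first step: since $E_1$ and $E_2$ are $Z$-type with equal syndromes, $E_1E_2$ lies in exactly one of the two cosets $\mathcal{S}_z$ or $Z^{\otimes n}\mathcal{S}_z$ (your verification that a purely $Z$-type element of the normalizer cannot sit in the $\bar X$ or $\bar Y$ classes is the part worth writing out, and it goes through). Where you diverge is in how coset membership is decided. You separate the cosets by a direct weight count: the symmetric-difference identity makes weight parity additive over products of $Z$-type operators, the even-weight-generator hypothesis forces every element of $\mathcal{S}_z$ to have even weight, and oddness of $n$ forces every element of $Z^{\otimes n}\mathcal{S}_z$ to have odd weight. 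The paper's generalization instead tests $E_1E_2$ against an arbitrary logical $X$ operator $L_x$ via the identity $[L,E_1E_2]=\left((-1)^{b_1}-(-1)^{b_2}\right)E_1LE_2$. The two tests coincide when $L_x=X^{\otimes n}$, because a $Z$-type operator $E$ anticommutes with $X^{\otimes n}$ precisely when $|E|$ is odd --- which is exactly the observation the paper uses to motivate \cref{lem:WPEC_new}. What your route buys is a self-contained, elementary proof of the stated lemma that makes explicit where the hypotheses ``$n$ odd'' and ``even-weight generators'' enter; what the paper's route buys is the removal of both hypotheses, at the cost of replacing the directly countable weight parity by a commutation relation with a chosen logical operator.
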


%That is, for each error, its \emph{logical class} \HR{balint: this is a bit confusing, because logical class is defined to be relative to a set of CROs in the opt tools paper, but we don't mention CROs - one can figure out that CROs are chosen to be the odd weight recoveries, but maybe it's worth making it explicit?.} can be determined by its weight parity (more precisely, the logical class is 0 if the error has odd weight, and the logical class is 1 otherwise). By knowing the syndrome and the logical class of an unknown data error, we know exactly what recovery operation should be applied. 

That is, for errors with the same syndrome, we can determine the \emph{logical class} of each error by its weight parity (in this work, we let the logical class of each error be 0 if the error has odd weight, and let the logical class be 1 otherwise). By knowing the syndrome and the logical class of an unknown data error, we know exactly what recovery operation should be applied.

Consider error correction on the \codepar{49,1,9} concatenated Steane code, in which each block of 7 qubits behaves like the \codepar{7,1,3} Steane code. If the weight parity and the syndrome of an error in each block can be measured accurately, then an error in each block can be corrected regardless of the weight of the error. However, measuring the error weight parities of all blocks is not straightforward. In \cite{TL21}, the 2nd-level generators of the \codepar{49,1,9} code are chosen to be of the form $X^{\otimes 7}$ or $Z^{\otimes 7}$ on 4 blocks (the weight of each 2nd-level generator is 28). The weight parities of errors on all blocks are determined using the lookup table of the 1st and the 2nd level syndromes, and possible strings of weight parities. With the flag circuits for 1st-level generators and the non-flag circuits for 2nd-level generators given in \cite{TL21}, the fault set $\mathcal{F}_3$ is distinguishable. 
%(see also Section 5.2 of \cite{Tansuwannont21} for the connection between weight parities and fault set distinguishability).

\begin{figure*}[tbp]
	\centering
	\begin{subfigure}[b]{0.25\textwidth}
		\includegraphics[width=\textwidth]{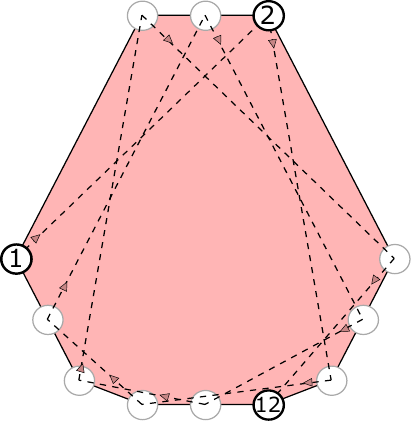}
		\captionsetup{justification=centering}
		\caption{}
		\label{subfig:stl2 good}
	\end{subfigure}
        %\par\medskip
        \hspace{6mm}
	\begin{subfigure}[b]{0.25\textwidth}
		\includegraphics[width=\textwidth]{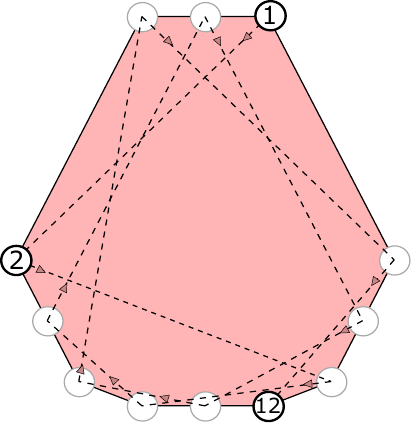}
		\captionsetup{justification=centering}
		\caption{}
		\label{subfig:stl2 bad}
	\end{subfigure}
\caption{A good CNOT ordering with effective distance 9 (a) and a bad CNOT ordering with effective distance 8 (b) for weight-12 generators of the \codepar{49,1,9} concatenated Steane code. The ordering starts with qubit 1 and 2, ends with qubit 12. The only difference between the two orderings is the swap of the first and second qubits.}
\end{figure*}

Note that the weight parity of an $X$-type (or a $Z$-type) error on each block of 7 qubits is related to its commutation and anticommutation relationship with the operator $Z^{\otimes 7}$ (or $X^{\otimes 7}$) on the same block. Also, $Z^{\otimes 7}$ and $X^{\otimes 7}$ are not logical $Z$ and logical $X$ operators of the minimum weight for the \codepar{7,1,3} Steane code. These facts suggest an alternative way to determine a logical class of each error and lead to a generalization of \cref{lem:WPEC_TL22} as follows:

\begin{lemma}
	Let $C$ be an \codepar{n,1,d} CSS code and let $\mathcal{S}_x,\mathcal{S}_z$ be subgroups generated by $X$-type and $Z$-type generators of $C$, respectively. Also, let $L_x$ be any logical $X$ operator and $L_z$ be any logical $Z$ operator. Suppose $E_1,E_2$ are Pauli errors of any weights with the same syndrome. 
	\begin{enumerate}
		\item In case that $E_1,E_2$ are $Z$-type errors, $E_1E_2 = M$ for some $M \in \mathcal{S}_z$ if and only if $[E_1,L_x]=[E_2,L_x]=0$ or $\{E_1,L_x\}=\{E_2,L_x\}=0$.
		\item In case that $E_1,E_2$ are $X$-type errors, $E_1E_2 = M$ for some $M \in \mathcal{S}_x$ if and only if $[E_1,L_z]=[E_2,L_z]=0$ or $\{E_1,L_z\}=\{E_2,L_z\}=0$.
	\end{enumerate}
	\label{lem:WPEC_new}%
\end{lemma}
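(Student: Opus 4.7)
The strategy is to reduce the biconditional to the standard fact, for an \codepar{n,1,d} CSS code, that a $Z$-type Pauli with trivial syndrome lies in $\mathcal{S}_z$ if and only if it commutes with every logical $X$ operator, and then to resolve the commutation of the product $E_1E_2$ with $L_x$ in terms of the individual commutation signs of $E_1$ and $E_2$. The two parts of the lemma are symmetric under the interchange $X\leftrightarrow Z$, so I would carry out only part~1 in detail.

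First I would verify that $E:=E_1E_2$ is again a $Z$-type Pauli and that $E$ has trivial syndrome; the latter is immediate from the assumption that $E_1$ and $E_2$ have the same syndrome, because the syndrome map is a group homomorphism, so $E$ commutes with every element of $\mathcal{S}_x$. Since the code encodes one logical qubit, the $Z$-type part of the centralizer of the stabilizer group, taken modulo $\mathcal{S}_z$, is cyclic of order two and is generated by a representative logical $Z$ operator. Consequently $E$ either lies in $\mathcal{S}_z$ or equals a stabilizer times a logical $Z$, and these two possibilities are distinguished exactly by whether $E$ commutes or anticommutes with any fixed logical $X$ operator $L_x$.

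Next I would perform the commutation case analysis. Each $E_i$ is a Pauli, so $L_xE_i = s_iE_iL_x$ with $s_i\in\{+1,-1\}$, and a direct computation gives $L_xE_1E_2 = s_1s_2\,E_1E_2L_x$. Thus $E$ commutes with $L_x$ iff $s_1=s_2$, which is precisely the disjunction ``both commute or both anticommute'' appearing in the statement. Combining this with the centralizer reduction of the previous paragraph establishes the biconditional of part~1; part~2 follows identically after the $X\leftrightarrow Z$ swap.

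The one subtlety I anticipate is checking that the stated condition is intrinsic, meaning it does not depend on which representative $L_x$ one picks. Any two logical $X$ operators differ by an element of $\mathcal{S}_x$, and because $E_1$ and $E_2$ share their commutation signatures with every element of $\mathcal{S}_x$ (this being the content of ``same syndrome''), multiplying $L_x$ by a stabilizer flips $s_1$ and $s_2$ simultaneously and preserves the equality $s_1=s_2$. The main ``obstacle'' is conceptual rather than computational, namely recognizing that a single logical operator already detects the two-element logical class; the proof then reduces to bookkeeping. As a consistency check, specializing to a code satisfying the hypotheses of \cref{lem:WPEC_TL22} with $L_x=X^{\otimes n}$ and $L_z=Z^{\otimes n}$ recovers that lemma, because the commutation of a $Z$-type error with $X^{\otimes n}$ is exactly its weight parity mod~$2$.
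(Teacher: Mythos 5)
Your proposal is correct and follows essentially the same route as the paper's proof: both reduce to the observation that $E_1E_2$ is a $Z$-type operator with trivial syndrome, hence either a stabilizer or a logical $Z$, and both distinguish the two cases by noting that the commutation sign of $E_1E_2$ with $L_x$ is the product of the individual signs, so $E_1E_2$ commutes with $L_x$ exactly when $E_1$ and $E_2$ have the same commutation relation with $L_x$. Your added check that the criterion is independent of the representative $L_x$ is a nice touch not present in the paper, but it is not needed since the lemma is stated for an arbitrary fixed choice of $L_x$.
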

\begin{proof}
	Here, we focus only on the first case where $E_1$ and $E_2$ are $Z$-type errors as the proof of the second case is similar. Because $E_1$ and $E_2$ are $Z$-type errors with the same syndrome, $E_1E_2$ is either a $Z$-type stabilizer or a logical $Z$ operator. Also, observe that $[L,E_1E_2]=LE_1E_2-E_1E_2L=\left((-1)^{b_1}-(-1)^{b_2}\right)E_1LE_2$ where $b_x=0$ if $E_x$ and $L$ commute and $b_x=1$ if they anticommute. Similarly, we have $\{L,E_1E_2\}=LE_1E_2+E_1E_2L=\left((-1)^{b_1}+(-1)^{b_2}\right)E_1LE_2$.
 
	$(\Leftarrow)$ In case that $[E_1,L_x]=[E_2,L_x]=0$ or $\{E_1,L_x\}=\{E_2,L_x\}=0$, we have $[L,E_1E_2]=0$. Thus, $E_1E_2$ must be a $Z$-type stabilizer in $S_z$.

    $(\Rightarrow)$ Since a pair of Pauli operators either commute or anticommute, the negation of ``$[E_1,L_x]=[E_2,L_x]=0$ or $\{E_1,L_x\}=\{E_2,L_x\}=0$" is ``$[E_1,L_x]=\{E_2,L_x\}=0$ or $\{E_1,L_x\}=[E_2,L_x]=0$". In either case, we have $\{L_x,E_1E_2\}=0$. Therefore, $E_1E_2$ must be a logical $Z$ operator. This completes the proof.
    
	%$(\Leftarrow)$ In case that $[E_1,L_x]=[E_2,L_x]=0$, we have $E_1L_x = L_xE_1$ and $E_2L_x = L_xE_2$. In case that $\{E_1,L_x\}=\{E_2,L_x\}=0$, we have $E_1L_x = -L_xE_1$ and $E_2L_x = -L_xE_2$. Either way, the pair of equations give $(E_1L_x)(L_xE_2) = (L_xE_1)(E_2L_x)$. That is, $E_1E_2 = L_x(E_1E_2)L_x$, which is equivalent to $[E_1E_2,L_x]=0$. Therefore, $E_1E_2$ must be a $Z$-type stabilizer in $S_z$.
	
	%$(\Rightarrow)$ Since a pair of Pauli operators either commute or anticommute, the negation of ``$[E_1,L_x]=[E_2,L_x]=0$ or $\{E_1,L_x\}=\{E_2,L_x\}=0$" is ``$[E_1,L_x]=\{E_2,L_x\}=0$ or $\{E_1,L_x\}=[E_2,L_x]=0$". In case that $[E_1,L_x]=\{E_2,L_x\}=0$, we have $E_1L_x = L_xE_1$ and $E_2L_x = -L_xE_2$. In case that $\{E_1,L_x\}=[E_2,L_x]=0$, we have $E_1L_x = -L_xE_1$ and $E_2L_x = L_xE_2$. Either way, we find that $(E_1L_x)(L_xE_2) = -(L_xE_1)(E_2L_x)$, which is equivalent to $\{E_1E_2,L_x\}=0$. Thus, $E_1E_2$ must be a logical $Z$ operator. This completes the proof.
\end{proof}

Note that in contrast to \cref{lem:WPEC_TL22}, \cref{lem:WPEC_new} does not require $n$ to be odd or all stabilizers to have even weight.

Consider possible $X$-type errors on the \codepar{7,1,3} Steane code with the same syndrome $\vec{s}=(100)$ depicted in \cref{fig:error_class} as an example (where the order of the syndrome bits corresponds to red (top), green (bottom-right), and blue (bottom-left) $Z$-type generators). Here, we can see that two errors are logically equivalent if and only if their weight parities \emph{evaluated on the same side of the triangle} are equal. This comes from the fact that the $Z$-type operator of weight 3 on each side of the triangle is a logical $Z$ operator of the \codepar{7,1,3} Steane code. That is, if we know exactly on which side of the triangle the weight parity of the error is being measured, we can find the logical class of the error accurately.

\cref{lem:WPEC_new} and the example above suggest that we can choose the 2nd-level generators of the \codepar{49,1,9} concatenated Steane code to be operators of weight 12 (as described in \cref{sec:QECC}) instead of weight 28 as in \cite{TL21}. If only a single fault occurs, a recovery operator for each block of 7 qubits can be found using the 1st-level syndrome from generators in each block, together with the 2nd-level syndrome, which provides the logical class information (since we know exactly which side of each triangle is being measured). %Our goal is to find a recovery operator when up to 4 faults occur. However, finding the logical class information for each block could be more complicated when the number of faults increases. Thus, we will instead solve an equivalent problem: finding a CNOT ordering for flag circuits that give a distinguishable fault set. 
Our goal is to find a recovery operator when up to 4 faults occur. Finding the logical class information for each block could be more complicated when the number of faults increases since each 2nd-level syndrome bit provides the ``sum" of the logical classes from all blocks that the 2nd-level generator touches. One way to find the recovery operators is to iterate through all possible fault combinations arising from up to 4 faults and build a mapping between each full syndrome and the corresponding logical classes from all blocks (similar to the WPEC technique in \cite{TL21}), where the possible values depend on the CNOT ordering. Note that this is equivalent to finding a CNOT ordering for flag circuits that give a distinguishable fault set, a problem in which the tools provided in \cite{PTHB23} can solve well if the code size is not too large.
%\HR{This is where I have an issue with the whole thing. We spent a whole page understanding WPEC, and then suddenly we are looking at CNOT ordering..., which are only saying "spread them evenly across the subblocks". What is the ordering in [1]? How does it inspire this one?} 

\begin{figure*}[tbp]
\begin{subfigure}[b]{0.49\textwidth}
\includegraphics[width=\textwidth]{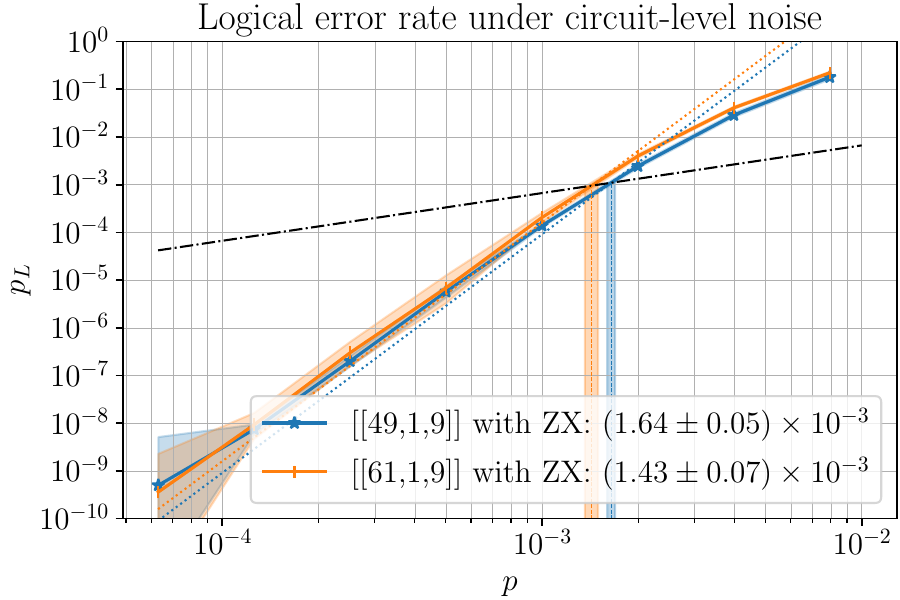}
\captionsetup{justification=centering}
\caption{}
\label{fig:circuit_level}
\end{subfigure}
\hspace{1mm}
\begin{subfigure}[b]{0.49\textwidth}
\includegraphics[width=\textwidth]{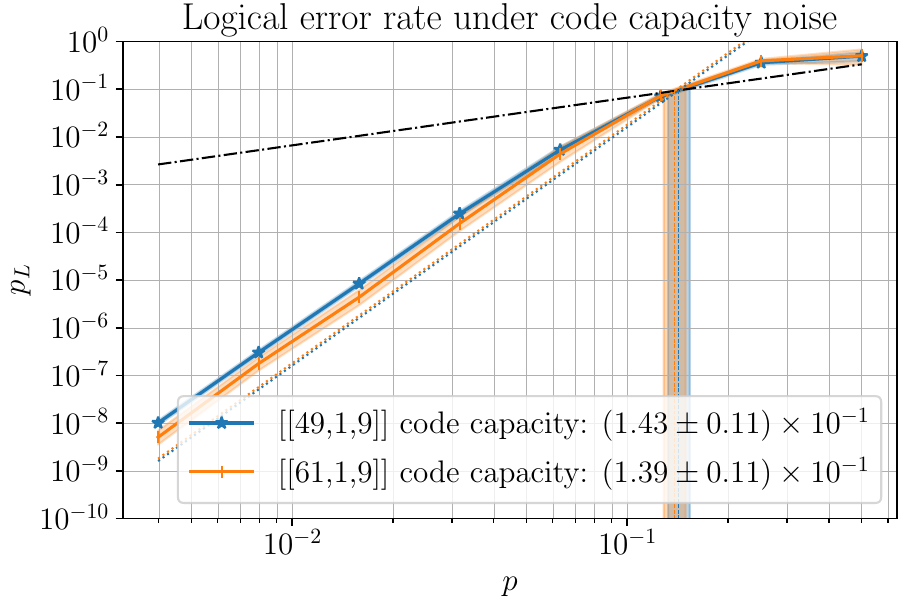}
\captionsetup{justification=centering}
\caption{}
\label{fig:cc}
\end{subfigure}
\vspace{1mm}
\caption{(a) The circuit level performance of the \codepar{49,1,9} concatenated Steane code and the \codepar{61,1,9} 6.6.6 color code. The two codes are very close in performance, however, the pseudothreshold of the concatenated Steane code is roughly 14\% higher than that of the 6.6.6 color code. The pseudothreshold is against the $p=2p/3$ line, denoted by the green dashed line on the graph. The dotted lines, which serve as guides to the eyes, have the same slope as $p^5$ and intersect at the pseudothresholds, showing that our protocol is distance-preserving. (b) Logical vs physical error rates under code capacity error model for the \codepar{49,1,9} concatenated Steane code and the \codepar{61,1,9} 6.6.6 color code. The pseudothreshold is against the $p=2p/3$ line, denoted by the black dash-dot line on the graph. The dotted lines have the same slope as $p^5$ and intersect at the pseudothresholds, showing that both error correcting codes have distance 9.} 
\end{figure*}

To find such CNOT ordering, we use an idea similar to the construction in \cite{TL21}: the CNOT ordering for each circuit is chosen in the way that possible errors arising from each single fault are distributed to as many blocks as possible. This is to avoid the case that many errors concentrate in a single block but the flag bits are trivial as much as possible. Using the tools provided in \cite{PTHB23}, we can verify that the ordering provided in \cref{tab:stl2 orderings} gives a distinguishable fault set $\mathcal{F}_4$. The ordering for any weight-12 generator is illustrated in \cref{subfig:stl2 good}. It should be noted that our circuits for measuring 2nd-level generators are flag circuits, in contrast to the circuits in \cite{TL21}, which are non-flag circuits.

We point out that choosing the gate ordering that distribute errors alone cannot guarantee the distinguishability of $\mathcal{F}_4$. We find that when using, for example, the ordering in \cref{subfig:stl2 bad} for all weight-12 generators, $\mathcal{F}_4$ is not distinguishable; the bad ordering only differs from the good ordering by a single swap of the roles of two qubits. This swap results in the decrease of the effective distance from 9 to 8, showcasing how sensitive the protocol is to ordering.

\section{Numerical results} \label{sec:numerics}

In our numerical simulations of the flag FTEC protocol on the \codepar{49,1,9} concatenated Steane code, we collect data to plot the logical error rate $p_L$ as a function of the physical error rate $p$ under the circuit-level noise model. The preparation of the logical zero states on the data qubits at the beginning of the protocol is assumed to be perfect. When applicable, gate noise is implemented as single and two-qubit depolarizing instructions and preparation and measurement noises are implemented as bit-flip noise on a single qubit after preparation and before measurement operations. The simulator uses Pauli frame propagation of noise terms through the Clifford-only operations of the syndrome extraction circuits. In each round of syndrome measurements, we use flag circuits with the gate ordering described in \cref{app:orderings}. After executing the necessary number of noisy rounds determined by the time decoder, the full syndrome obtained from the time decoder is passed to the space decoder, which then gives us a recovery operator. We then perform an ideal syndrome measurement, apply an ideal recovery operator, and calculate the true eigenvalue of the logical $Z$ observable on the output state. If the eigenvalue of the logical $Z$ observable is $-1$, we have a logical error. 

For each physical error rate $p$, we collect up to $5 \times 10^{10}$ sample points. We stop the collection early when the number of logical errors reaches 1000. Error bars are reported as shaded areas in our figures, and they should be interpreted as the most likely area for the true value of the logical error rate. The reported $p_L$ value is the sample mean. Using the binomial likelihood function we can estimate the probability that a given value $p_L$ is the correct probability of logical error given the number of samples and number of logical errors. The upper and lower bounds of the shaded area are defined as $p_L$ values that are $10^3$ times less likely than the sample mean, the maximum likelihood estimator.

\begin{table*}
\centering
\begin{tabular}{|c|c|c|c|c|}
\hline
\multirow{2}{*}{Error model} & \multirow{2}{*}{Code}  & \multicolumn{3}{c|}{Number of locations} \\
\cline{3-5}
 &  & data error & gate &  preparation \& measurement \\
\hline
\hline
\multirow{2}{*}{Circuit-level} & \codepar{61,1,9} & 0 & 276 & 120 \\
\cline{2-5}
 & \codepar{49,1,9} & 0 & 216 & 96 \\   
\hline 
\multirow{2}{*}{Code capacity} & \codepar{61,1,9} & 61 & 0 & 0 \\
\cline{2-5}
 & \codepar{49,1,9} & 49 & 0 & 0 \\  
\hline
\end{tabular}
\caption{Number of fault locations under the circuit-level and code capacity error models for the \codepar{49,1,9} concatenated Steane code and the \codepar{61,1,9} 6.6.6 color code. Data error refers to single-qubit depolarizing noise terms on each data qubit before a round of error correction. Gate faults correspond to single-qubit and two-qubit depolarizing noise terms after single-qubit and two-qubit gates, respectively. Preparation and measurement faults are bit-flip noise terms after ancilla reset and before ancilla measurement operations, both in the Z-basis. The number of locations for the circuit-level noise model is per round of error correction and per X or Z-type of stabilizer measurement circuits, meaning for a full syndrome measurement round, one needs to multiply the numbers by two.}
\label{tab:locations}
\end{table*}

We generate the noisy circuit definitions in Python using Cirq \cite{Cirq22}, sample the circuits using Stim \cite{Gidney21}, and decode them with our C++ lookup table decoder \cite{PTHB23}. Pseudothresholds are calculated based on the intersection with the $p_L=2p/3$ line of the linearly interpolated sample mean. For pseudothreshold errors, we calculate the intersections with the $p_L=2p/3$ line of the upper and lower bounds and use the one that is further from the pseudothreshold. We note that the $p_L=2p/3$ line represents the infidelity of any single-qubit pure state when it is sent through the depolarizing channel in which each single-qubit Pauli error occurs with probability $p/3$.

For comparison, we also perform numerical simulations of the flag FTEC protocol on the \codepar{61,1,9} 6.6.6 color code of distance 9. The 6.6.6 color code of distance $d$ is an \codepar{(3d^2+1)/4,1,d} self-orthogonal CSS code, whose qubits can also be laid out on a plane \cite{BM06}. When performing flag FTEC using flag circuits similar to the one in \cref{fig:flag_w}, it has been proved that the code distance is preserved regardless of the gate ordering being used \cite{CKYZ20,TL22}. The exact gate ordering for the \codepar{61,1,9} code that we use is described in \cref{tab:tccd9 orderings} in \cref{app:orderings}. In this work, we used the simulation data from \cite{PTHB23} and collected some more data points using the same simulator to reduce the error bars.

The plots of $p_L$ versus $p$ under the circuit-level noise model for the \codepar{49,1,9} concatenate Steane code and the \codepar{61,1,9} 6.6.6 color code are shown in \cref{fig:circuit_level}. Using the lookup table decoder for flag FTEC, the MIM technique, the two-tailed adaptive time decoder, and the ZX separated counting strategy (as previously described in \cref{sec:noise_model}), the \codepar{49,1,9} code achieves a pseudothreshold of $(1.64 \pm 0.05) \times 10^{-3}$, which is slightly better than a pseudothreshold of $(1.43 \pm 0.07) \times 10^{-3}$ for the \codepar{61,1,9} code. The separation in logical error rates between the two codes disappears when the physical error rate is below $p=1.0 \times 10^{-3}$. 

While the difference is small, this result is surprising because in the code capacity error model reported in \cite{SAB22}, the ranking is the opposite, as the \codepar{61,1,9} code slightly outperforms the \codepar{49,1,9} code. In order to exclude the possibility of the difference between our lookup table-based decoder and the trellis decoder used in that work, we also simulate the logical error rates for the two codes of interest under the code capacity noise model (where the gate, preparation, and measurement faults are absent). In this case, we use a lookup table decoder for qubit errors only as a space decoder, and time decoding is not necessary since the syndrome measurements can be assumed to be perfect in this noise model. Our results are reported in \cref{fig:cc}. While the pseudothresholds of the two codes are in the same range, the separation in logical error rates is clear when the physical error rate is below $p=1.0 \times 10^{-2}$.  This confirms that the \codepar{61,1,9} code outperforms the \codepar{49,1,9} code in the code capacity noise model.

% \HR{TODO: get the different average error rates per number of faults at different $p$ levels} 

\section{Discussion}\label{sec:discussion}

It is natural to ask how the \codepar{49,1,9} concatenated Steane code can beat the \codepar{61,1,9} 6.6.6 color code in the circuit-level noise model. We conjecture that the number of locations where faults can occur per round (noise instructions in Stim) plays an important role in our settings. \cref{tab:locations} displays the number of possible locations for different codes and different noise models. 

Due to the logic of the adaptive syndrome measurement techniques, when a fault occurs in a single round, it is very likely that the syndrome of that round is different from the syndrome of the previous round, causing the repeated syndrome measurements to continue. That is, more locations per round likely lead to more rounds and also more total locations in the whole protocol.

Note that the code distance is preserved in both codes and both noise models, so all fault combinations with up to 4 faults can be successfully corrected. Since the cases with 5 faults are more likely to happen than the cases with 6 or more faults, we conjecture that the logical error rate for each code and each model is mainly determined by the proportion of fault combinations with 5 faults that lead to decoding failure. The proportion varies with the noise model, so the ranking in the circuit-level noise model could be different from the ranking in the code capacity noise model. The reason that the \codepar{61,1,9} code performs worse in the circuit-level noise model could be because its protocol has more total locations on average compared to the  \codepar{49,1,9} code in the same noise model, which leads to more possible decoding failure cases.

Another thing to point out is that the \codepar{49,1,9} concatenated Steane code has smaller average weight per stabilizer generator compared to the \codepar{61,1,9} 6.6.6 color code; the \codepar{49,1,9} code has 6 generators of weight 12 and 42 generators of weight 4, while the \codepar{61,1,9} code has 36 generators of weight 6 and 24 generators of weight 4. This could be another factor that makes the \codepar{49,1,9} code perform better in the circuit-level noise model.  

It should be noted that the 4.8.8 color code of distance 9 \cite{BM06} is also a \codepar{49,1,9} self-orthogonal CSS code, so it is natural to ask whether it performs better than the \codepar{49,1,9} concatenated Steane code in the circuit-level noise model. However, as shown in \cref{app:488nogo}, it is impossible to construct a flag FTEC protocol with only a single flag qubit per generator that preserves the code distance for the 4.8.8 color codes. Since more flags are required when decoding in the circuit-level noise model, the 4.8.8 color code seems worse than the concatenated Steane code in terms of qubit efficiency. We still do not know whether the 4.8.8 color code could perform better than the concatenated Steane code. It is possible that the 4.8.8 color code would perform worse since more flag qubits involved would lead to more locations per round. However, from our simulations it is clear that the situation is subtle and knowing the number of locations alone might not be enough to predict the performance.

\iffalse
\begin{table*}
\centering
\begin{tabular}{|c|c|c|c|}
\hline
Error model & Code/protocol  & # locations per X/Z stabilizers & input noise \\
\hline
\hline
Circuit-level & Single flag \codepar{61,1,9} & 120 MR + 276 Gates= 396 & 0 \\  
Phenomenological & Bare ancilla \codepar{61,1,9} & M 30 & 61 \\  
Code capacity & Bare ancilla \codepar{61,1,9} & 0 & 61 \\  
\hline 
Circuit-level & Single flag \codepar{49,1,9} & 96 MR + 216 Gates= 312 & 0 \\  
Phenomenological & Bare ancilla \codepar{49,1,9} & M 24 & 49 \\  
Code capacity & Bare ancilla \codepar{49,1,9} & 0 & 49 \\  
\hline
\end{tabular}

\caption{Number of fault locations under different error models for the \codepar{41,1,9} and the \codepar{61,1,9} codes. MR stands for noise instructions before measurement and after reset. The input noise column shows the weight of the depolarizing noise before each round of error correction - the number of data qubits.}
\label{tab:locations}
\end{table*}
\fi

\section{Conclusion and outlook}\label{sec:conclusion}

We found a distance-preserving, flag FTEC protocol with only two ancilla qubits per generator for the \codepar{49,1,9} concatenated Steane code. If we put physical connectivity constraints aside, the concatenated Steane code outperforms its topological siblings, the \codepar{61,1,9} 6.6.6 color code and the \codepar{49,1,9} 4.8.8 color code, under circuit-level depolarizing noise without idling noise. The 6.6.6 color code requires at least 12 more data qubits than the concatenated Steane code and still has lower performance near the pseudothreshold. We also showed that the 4.8.8 code cannot have a distance-preserving flag FTEC protocol with only a single flag qubit per generator. 

To our knowledge, no self-orthogonal CSS code of distance 9 with less than 49 data qubits has been found yet. Thus, we believe that the level of error suppression that the  \codepar{49,1,9} concatenated Steane code can achieve might be a promising target to demonstrate on an experimental platform where gate errors dominate over idling noise and at least 51 qubits with all-to-all qubit connectivity are available (assuming that fast measurement and reset on ancilla qubits are possible). Early fault tolerant demonstrations of Rydberg atom systems \cite{Bluvstein_2024} and trapped ion systems \cite{Wang_2023,Huang_Brown_Cetina_2023,egan_fault-tolerant_2021,Postler_Butt_2023,Ryan-Anderson_Brown_2022} promise to have the scale as well as error characteristics to make them a promising target for our scheme. We would like to emphasize that the performances of the concatenated Steane code and the 6.6.6 color code are very close under the considered noise model. This means that exposed to realistic noise, the ranking of the two codes might change. This issue might be especially pertinent on a 2D array of neutral atoms, where the concatenated Steane code might need more qubit movements than the 6.6.6 code. The more movements are expected due to the higher connectivity of the weight-12 stabilizer generators, and these moves might result in a significant increase in idling noise.

Our protocol would need to be significantly modified to work on platforms with high idling noise and constrained connectivity, for example, superconducting qubit architectures. We point out that the planar layout of the code allows for embedding it in a 2D architecture. However, for the weight-12 stabilizer generators, the single-flag scheme imposes degree 13 connectivity on the ancilla qubits, which is much higher than the currently available devices \cite{Morvan_2023,Kim_2023}. It is possible to preserve planarity and low connectivity at the cost of an increased number of ancilla qubits. Dominant idling noise requires further optimizations to minimize the depth of our circuit and, hence, to reduce the number of locations where idling noise can occur. We leave this exploration to future work. 

Our methods handle the concatenated Steane code as a regular color code, in the sense that we do not concatenate the ancilla qubits and do not make explicit use of the concatenated structure in our decoder, as opposed to what a level-by-level hard decoder for a concatenated code in \cite{AGP06} would do. This approach can be considered as an example of a circuit-level soft decoder for a concatenated code. Poulin showed that in the code capacity noise model, soft decoding using belief propagation is efficient and optimal for concatenated codes \cite{Poulin_2006}. A further path of inquiry could be to find a circuit-level soft decoder that has better scaling than our lookup table-based method. 

The role of CNOT ordering has proved to be critical for preserving distance in the concatenated Steane code. CNOT ordering might also have an impact on the success probability of decoding fault combinations consisting of more than $\lfloor (d-1)/2 \rfloor$ faults for both the concatenated Steane code and the 6.6.6 color code. It is unclear whether there exist CNOT orderings that can close the gap between the two codes or even restore the ranking found under code capacity error models. The exact conditions of gate orderings that can give a distinguishable fault set is still an open question. Another interesting research direction would be investigating whether it is possible to generalize our techniques to a concatenated Steane code with more levels of concatenation. It is also intriguing to explore whether our techniques can be applied in the recently discovered high-performance concatenated architectures that concatenate different codes at different levels \cite{Yamasaki_Koashi_2024, Yoshida_Tamiya_Yamasaki_2024}. If possible, the combined extraction of multiple levels with only a few extra flag qubits could result in significant qubit savings.

Our work highlights the fact that the code performance ranking under the code capacity noise model does not predict performance ranking under the circuit-level noise model. Under the code capacity noise model, the \codepar{61,1,9} 6.6.6 color code performs better than the \codepar{49,1,9} concatenated Steane code, and the \codepar{49,1,9} 4.8.8 color code has an even higher threshold than the \codepar{61,1,9} 6.6.6 color code \cite{SAB22}. In circuit-level noise restricted to using a single flag qubit per generator, the concatenated Steane code comes out as the best performer. While we conjecture that the role of CNOT ordering, the number of fault locations, and the average weight per generator all play a role in creating these effects, more exploration and data are needed for a concrete theory.

\section{Data and code availability}

All source code to reproduce the data and the actual data in this work are available to download \cite{pato_2024_13312437}. 

\section{Acknowledgements}

We would like to thank Shilin Huang and Narayanan Rengaswamy for helpful discussions. We also thank Andrew Landahl for suggesting looking into the 4.8.8 color codes. The work was supported by the Office of the Director of National Intelligence - Intelligence Advanced Research Projects Activity through an Army Research Office contract (W911NF-16-1-0082), the Army Research Office Multidisciplinary University Research Initiative (MURI) program (W911NF-16-1-0349), the Army Research Office (W911NF-21-1-0005), and the National Science Foundation Institute for Robust Quantum Simulation (QLCI grant OMA-2120757).

\appendix

\section{Explicit CNOT orderings}\label{app:orderings}

For our numerical simulations, we used the indexing for the qubits in the \codepar{49,1,9} concatenated Steane code as displayed in \cref{fig:stl2 qubits}. The orderings of the data CNOT operators for each stabilizer generator are then defined in \cref{tab:stl2 orderings} using these indices. 

Similarly, we used the indexing for the qubits in the \codepar{61,1,9} 6.6.6 color code as displayed in \cref{fig:tccd9 qubits}. The orderings of the data CNOT operators for each stabilizer generator are then defined in \cref{tab:tccd9 orderings} using these indices. 

\begin{figure}[htbp]
\centering
\includegraphics[width=0.48\textwidth]{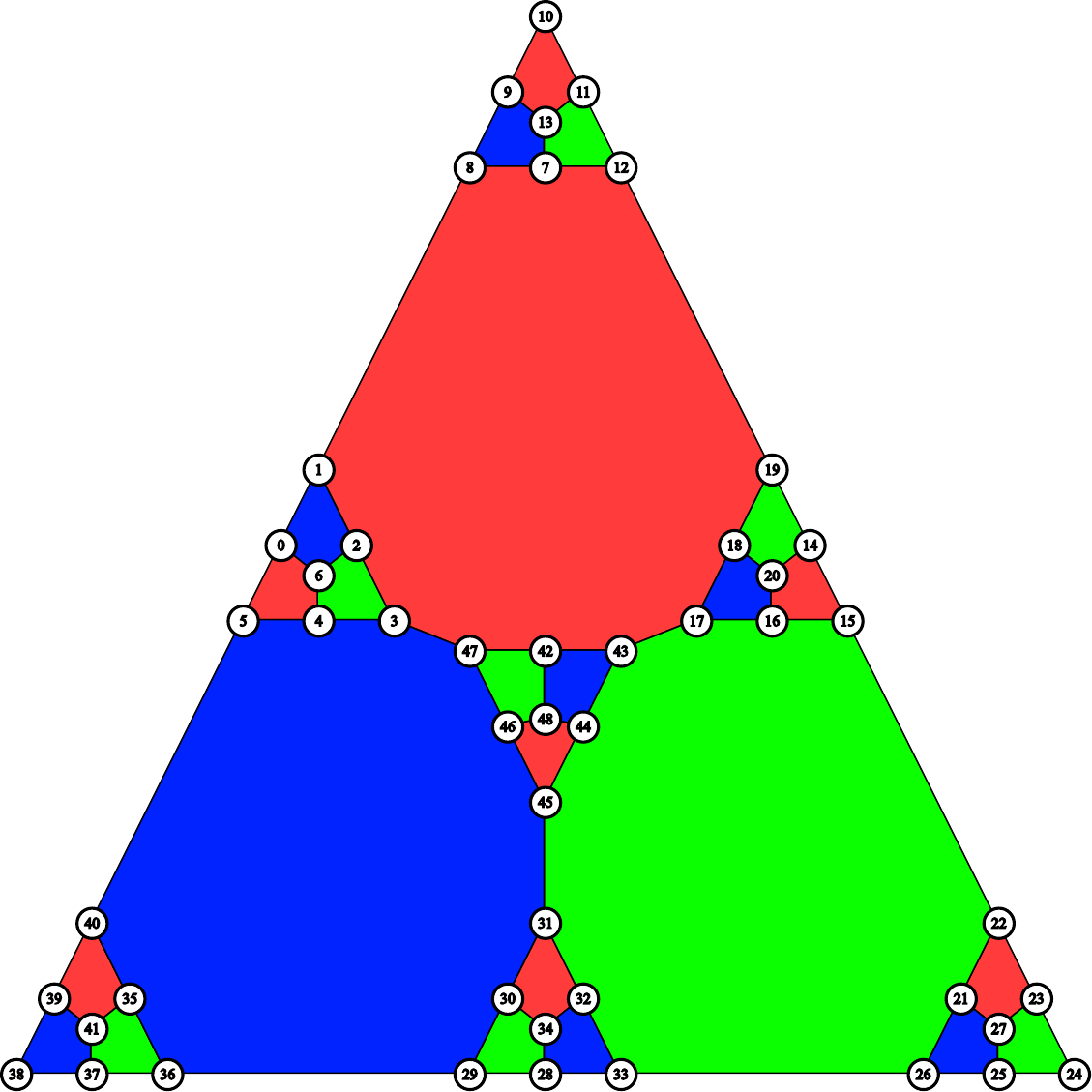}
\caption{The qubit indices of the \codepar{49,1,9} concatenated Steane code used in our numerical simulations. The CNOT orderings for the stabilizer generators are based on these numbers. }
\label{fig:stl2 qubits}
\end{figure}

\begin{figure}[htbp]
\centering
\includegraphics[width=0.48\textwidth]{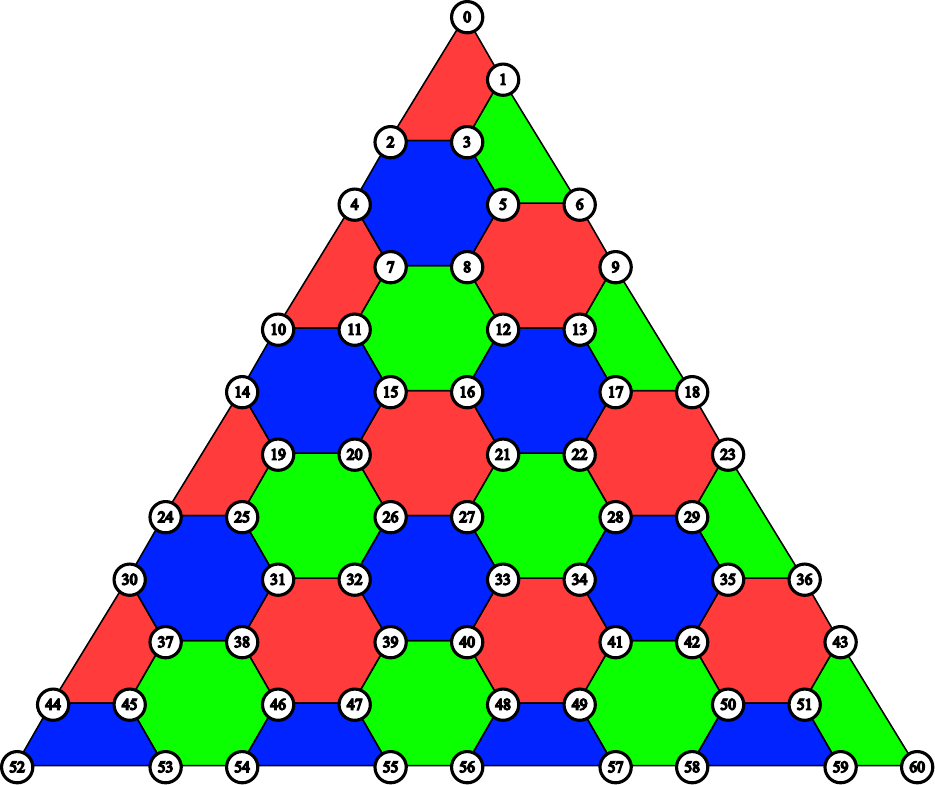}
\caption{The qubit indices of the \codepar{61,1,9} 6.6.6 color code used in our numerical simulations. The CNOT orderings for the stabilizer generators are based on these numbers. }
\label{fig:tccd9 qubits}
\end{figure}

\begin{table}[htbp]
\small
\begin{tabular}{|c|c|c|}  
    \hline
    Red & Green & Blue \\
    \hline 
[4,5,0,6] & [2,3,4,6] & [0,1,2,6] \\ [2pt]
\hline
[9,10,11,13] & [11,12,7,13] & [7,8,9,13] \\[2pt]
\hline
[14,15,16,20] & [18,19,14,20] & [16,17,18,20] \\[2pt]
\hline
[21,22,23,27] & [23,24,25,27] & [25,26,21,27] \\[2pt]
\hline
[30,31,32,34] & [28,29,30,34] & [32,33,28,34] \\[2pt]
\hline
[39,40,35,41] & [35,36,37,41] & [37,38,39,41] \\[2pt]
\hline
[44,45,46,48] & [46,47,42,48] & [42,43,44,48] \\[2pt]
\hline
$\begin{matrix}[1,\!12,\!17,\!47,\\2,\!7,\!18,\!42,\\3,\!8,\!19,\!43]\end{matrix}$ & $\begin{matrix}[15,\!26,\!31,\!43,\!\\16,\!21,\!32,\!44,\!\\17,\!22,\!33,\!45]\end{matrix}$ & $\begin{matrix}[29,\!40,\!3,\!45,\!\\30,\!35,\!4,\!46,\!\\31,\!36,\!5,\!47]\end{matrix}$ \\ [2pt]
\hline
\end{tabular}
\centering
\caption{CNOT ordering for each generator of the \codepar{49,1,9} concatenated Steane code.}
\label{tab:stl2 orderings}

\end{table}
 
\begin{table}[htbp]
\small
\begin{tabular}{|c|c|c|} 
    \hline  
    Red &Green & Blue \\
    \hline 
[0,2,3,1]&[1,6,5,3]&[44,45,53,52] \\ [2pt]
\hline 
[4,10,11,7] &[9,18,17,13]&[46,47,55,54]\\[2pt]
\hline 
[14,24,25,19]&[23,36,35,29]&[48,49,57,56]\\ [2pt]
\hline 
[30,44,45,37]&[43,60,59,51]&[50,51,59,58]\\    [2pt] 
\hline 
[5,6,9,13,12,8]&[7,8,12,16,15,11]&[2,3,5,8,7,4]\\  [2pt]   
\hline 
[15,16,21,27,26,20]& [19,20,26,32,31,25]&[10,11,15,20,19,14]\\ [2pt]
\hline 
[17,18,23,29,28,22]& [21,22,28,34,33,27]&[12,13,17,22,21,16]\\ [2pt]
\hline 
[31,32,39,47,46,38]& [37,38,46,54,53,45]&[24,25,31,38,37,30]\\ [2pt]
\hline 
[33,34,41,49,48,40]& [39,40,48,56,55,47]&[26,27,33,40,39,32]\\ [2pt]
\hline 
[35,36,43,51,50,42]& [41,42,50,58,57,49]&[28,29,35,42,41,34]\\ [2pt]
\hline
\end{tabular}
\centering
\caption{CNOT ordering for each generator of the \codepar{61,1,9} 6.6.6 color code.}
\label{tab:tccd9 orderings}

\end{table}

\section{No-go theorem for distance-preserving flag FTEC with single flag ancilla for the 4.8.8 color codes}
\label{app:488nogo}

In this section, we prove that for any 4.8.8 color code of distance $d \geq 5$, it is impossible to construct a flag FTEC protocol with a single flag ancilla that preserves the code distance. The proof has two main steps: (1) We perform exhaustive search on all possible CNOT orderings for the 4.8.8 color code of distance 5, given that each syndrome extraction circuit have only one flag ancilla (similar to the circuit in \cref{fig:flag_w}). In this step, we show that for any possible single-flag syndrome extraction circuit and any possible CNOT ordering, at least one logical operator of weight 5 with trivial flag bits can arise from 4 or fewer faults \footnote{A code for our proof is published in a Colab python notebook: \url{https://colab.research.google.com/drive/115qVvd8zvEF8JikAHIKLtnGuTrjsMUZR}}. (2) We extend the result to the code of distance $d=5+2i$ with $i=1, 2, 3, \dots$ by relating each CNOT ordering for that code with a CNOT ordering for the code of distance 5 of a similar form. In this step, we show that for any possible ordering for the code of distance $d=5+2i$, at least one logical operator of weight $5+2i$ with trivial flag bits can arise from $4+2i$ or fewer faults. 

%\footnote{A code for our proof is published in a Colab python notebook: \href{https://colab.research.google.com/drive/115qVvd8zvEF8JikAHIKLtnGuTrjsMUZR}{https://colab.research.google.com/drive/} \href{https://colab.research.google.com/drive/115qVvd8zvEF8JikAHIKLtnGuTrjsMUZR}{115qVvd8zvEF8JikAHIKLtnGuTrjsMUZR}}.

In Step (1), we start by observing that the 4.8.8 color code of distance 5 has a single weight-8 stabilizer generator, and all the other stabilizer generators have weight 4. For each weight-4 generator, we choose its flag circuit to be the circuit in \cref{fig:flag}, the best possible flag circuit for each weight-4 generator in which any single fault that can lead to a weight-2 data error always gives a nontrivial flag bit. Therefore, any possible fault combination with trivial flag bits from each weight-4 generator consists of at least two faults. Such a fault combination give rise to a single- or two-qubit data error (up to a multiplication of the same generator), thus the CNOT ordering for any weight-4 generator cannot impact the distinguishability of the fault set. That is, to search through all possible orderings of data CNOTs, it is sufficient to consider only $8!=40320$ CNOT orderings on the weight-8 generator.

We also consider all possible configurations of the two flag CNOT gates for the circuit measuring the weight-8 generator. Here, we assign a pair $(s,e)$ where $s,e\in \{0,1,\dots,8\}$ to the flag CNOTs, denoting two data CNOTs the two flag CNOTs are to be inserted after ($0$ refers to the location before the first data CNOT). For example, the configuration of the traditional flag circuit as in \cref{fig:flag_w} corresponds to the pair $(1,7)$. We iterate through all pairs of flag CNOTs with $s<e$, from $(0,1), (0,2),\dots$ to $(7,8)$.

Next, we generate two sets of possible logical operators of weight 5, the W2L and the W4L sets, which contain logical operators that overlap with the weight-8 generators on exactly two and four qubits, respectively. More precisely, let $Q$ be the set of supporting qubits of the weight-8 generator. The W2L set contains all logical operators $L$ such that $\mathrm{wt}(L)=5$ and $|\mathrm{supp}(L)\cap Q|=2$, and the W4L set that contains all logical operators $L$ such that $\mathrm{wt}(L)=5$ and $|\mathrm{supp}(L)\cap Q|=4$. Logical operators in the W2L and the W4L sets are shown in \cref{fig:err_d5_w2_data,fig:err_d5_w4_data}, respectively (for the W4L set, we only display the logical operators in which $\mathrm{supp}(L)\cap Q$ are distinct).  

\begin{figure*}[htbp]
	\centering
	\includegraphics[width=0.98\textwidth]{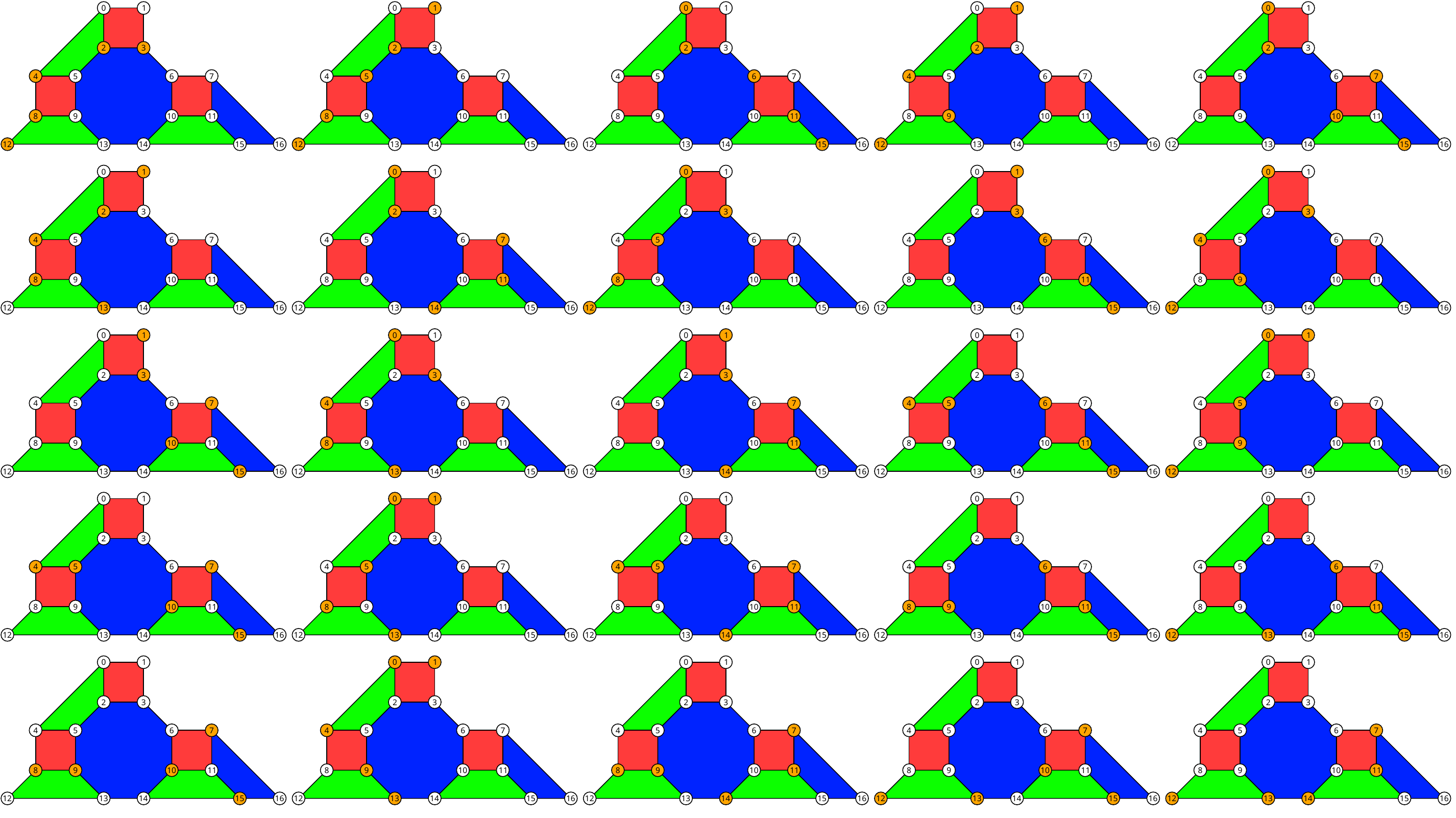}
	\caption{Weight-5 logical operators of the 4.8.8 color code of distance 5 that overlap with exactly two supporting qubits of the weight-8 stabilizer generator. Only logical operators with distinct sets of overlapping qubits are displayed.}
	%\caption{\HR{Length-5 logical strings in the distance 5 4.8.8 color code that have weight 2 overlap with the weight-8 stabilizer generator.}}
	\label{fig:err_d5_w2_data}
\end{figure*}

\begin{figure*}[htbp]
	\centering
	\includegraphics[width=0.98\textwidth]{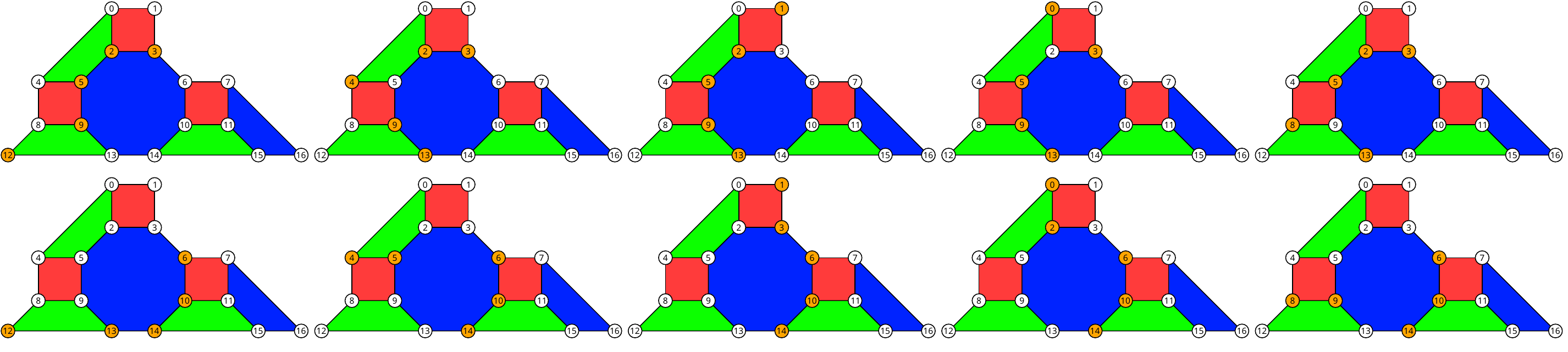}
	\caption{Weight-5 logical operators of the 4.8.8 color code of distance 5 that overlap with exactly four supporting qubits of the weight-8 stabilizer generator. The logical operators in the upper row are equivalent to the logical operators in the lower row up to a multiplication of the weight-8 stabilizer generator. }
	%\caption{\HR{The 5 pairs of possible length-5 logical strings in the distance 5 4.8.8 color code that have weight 4 overlap with the weight-8 stabilizer generator. The strings in the upper row are equivalent to the strings in the lower row up to multiplication with the weight-8 stabilizer generator.}}
	\label{fig:err_d5_w4_data}
\end{figure*}

\begin{figure}[tbp!]
	\centering
	\includegraphics[width=0.4\textwidth]{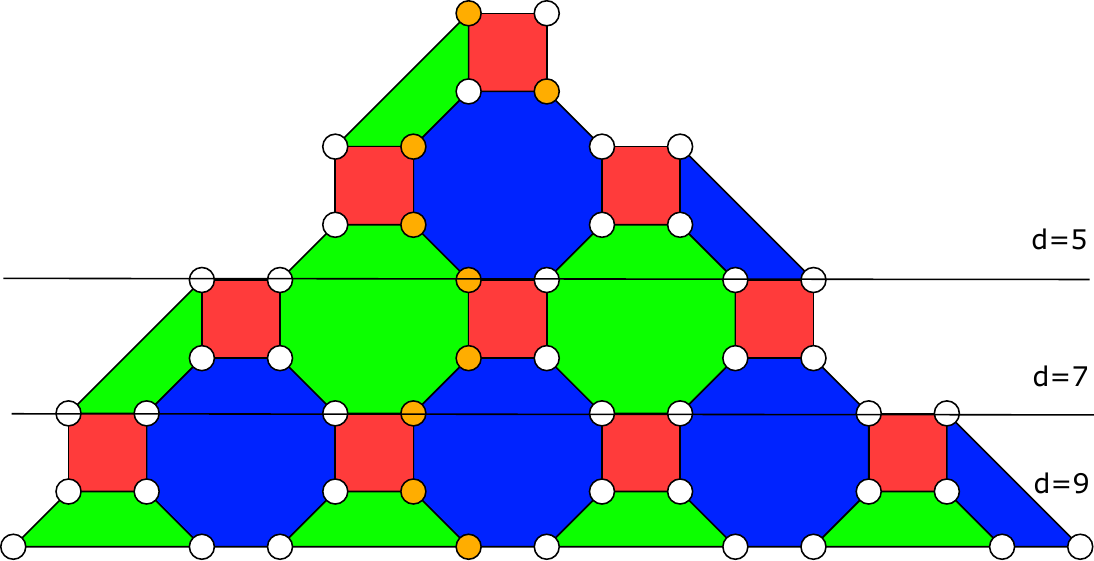}
	\caption{A weight-5 logical error on the 4.8.8 color code of distance 5 can be extended to a logical error of weight $5+2i$ on the code of distance $5+2i$ by adding $2i$ errors on the qubits that connect the old and the new bottom boundaries (depicted by horizontal lines).}
	\label{fig: app 488d975}
\end{figure}

For a given CNOT ordering of the weight-8 generator and a flag CNOT pair, the effective distance decreases if there exists a fault combination of 4 or fewer faults that results in one of the logical operators in the W2L or the W4L set. There are two cases that can cause the distance loss: (a) the case that a single, unflagged fault on the weight-8 generator leads to a weight-2 error whose support is the same as $\mathrm{supp}(L)\cap Q$ of some logical operator $L$ in the W2L set. This weight-2 error and 3 other faults from weight-4 generators can cause a logical operator in the W2L set. (b) the case that two faults on the weight-8 generator lead to an error of weight 3 (or 4) whose support is a subset of $\mathrm{supp}(L)\cap Q$ of some logical operator $L$ in the W4L set. This error of weight 3 (or 4) and 2 (or 1) other faults from weight-4 generators can cause a logical operator in the W4L set. In any case, 4 or fewer faults can cause a logical operator of weight 5. 

We tested the full population of $8!=40 320$ CNOT orderings for the weight-8 generator across all possible flag CNOT pairs and found that the FTEC protocol loses distance for all orderings. This proves that it is impossible to construct distance-preserving flag FTEC with single flag ancilla for the 4.8.8 color code of distance 5.

In Step (2), we observe that any of the logical operators in the W2L or the W4L set can be extended to a logical operator of weight $5+2i$ on any 4.8.8 color code of distance $5+2i, i=1,2,3,...$. This is due to the fact that each logical operator has support on only one qubit on the bottom boundary (the base of the triangle) of the code of distance 5. On the code of distance 7, an error of the same form can be connected to the red boundary by adding two data errors. We can repeat this process $i$ times for the code of distance $5+2i$. See \cref{fig: app 488d975} for an example.

Each CNOT ordering for the code of distance $5+2i$ has its corresponding CNOT ordering on the code of distance 5. Because for the code of distance 5, a logical operator in the W2L or the W4L set with trivial flag bits can arise from $4$ or fewer faults for all CNOT orderings, we find that for the code of distance $5+2i$, a logical operator of weight $5+2i$ with trivial flag bits can also arise from $4$ or fewer faults plus $2i$ data errors for all CNOT orderings. That is, distance-preserving flag FTEC with single flag ancilla is impossible for any 4.8.8 color code of distance $d \geq 5$.

%\newpage

%\bibliographystyle{ieeetr}
%\bibliographystyle{apsrev4-2}
\bibliography{bibtex_FT_Duke}

\end{document}